\newif\ifconf\conffalse

\ifconf
\documentclass[twoside,leqno,twocolumn]{article}
\usepackage{ltexpprt}

\else
\documentclass[11pt,letterpaper]{article}
\usepackage{amsthm}
\usepackage[margin=1in,dvips]{geometry}
\fi

\usepackage{color}
\usepackage{import}
\usepackage{verbatim}
\usepackage{tikz}
\usetikzlibrary{quotes,angles,decorations.pathreplacing,calc}
\usepackage{graphicx}
\usepackage{amssymb}
\usepackage{amsmath}
\usepackage{float}
\usepackage{enumerate}
\usepackage{kpfonts}
\usepackage[boxed,section]{algorithm}
\usepackage{algpseudocode}
\usepackage{multirow}
\usepackage[hidelinks]{hyperref}

\iffalse
\newtheorem{definition}{Definition}[section]
\newtheorem{remark}{Remark}[section]
\else
\newtheorem{theorem}{Theorem}[section]
\newtheorem{lemma}[theorem]{Lemma}

\newtheorem{definition}[theorem]{Definition}

\newtheorem{propn}[theorem]{Proposition}
\newtheorem{remark}[theorem]{Remark}

\fi

\newcommand{\abs}[1]{\left|#1\right|}

\newcommand{\norm}[2]{\left \lVert#2\right \rVert_{#1}}

\DeclareMathOperator{\disj}{\mathsf{MostlyDISJ}}
\newcommand{\dtv}{d_{\text{TV}}}
\newcommand{\noleftdelimiter}{\left.\kern-\nulldelimiterspace}

\let\emptyset\varnothing


{\makeatletter
	\gdef\xxxmark{%
		\expandafter\ifx\csname @mpargs\endcsname\relax 
		\expandafter\ifx\csname @captype\endcsname\relax 
		\marginpar{xxx}
		\else
		xxx 
		\fi
		\else
		xxx 
		\fi}
	\gdef\xxx{\@ifnextchar[\xxx@lab\xxx@nolab}
	\long\gdef\xxx@lab[#1]#2{{\bf [\xxxmark #2 ---{\sc #1}]}}
	\long\gdef\xxx@nolab#1{{\bf [\xxxmark #1]}}
}

\DeclareMathOperator{\supp}{supp}
\DeclareMathOperator*{\E}{\mathbb{E}}
\def\R{\mathbb{R}}

\def\eps{\varepsilon}
\def\wh{\widehat}
\def\wt{\widetilde}
\def\Thetat{\wt{\Theta}}
\def\cD{\mathcal{D}}

%
%
%
%
%
%
%
%
\usepackage{etoolbox}

\makeatletter

\newcommand{\define}[4][ignore]{%
  \ifstrequal{#1}{ignore}{}{
  \@namedef{thmtitle@#2}{#1}}%
  \@namedef{thm@#2}{#4}%
  \@namedef{thmtypen@#2}{lemma}%
  \newtheorem{thmtype@#2}[theorem]{#3}%
  \newtheorem*{thmtypealt@#2}{#3~\ref{#2}}%
}

\newcommand{\state}[1]{%
  \@namedef{curthm}{#1}
  \@ifundefined{thmtitle@#1}{
  \begin{thmtype@#1}
    }{
  \begin{thmtype@#1}[\@nameuse{thmtitle@#1}]
  }
    \label{#1}
    \@nameuse{thm@#1}
  \end{thmtype@#1}
  \@ifundefined{thmdone@#1}{
  \@namedef{thmdone@#1}{stated}%
  }{}
}

\newcommand{\restate}[1]{%
  \@namedef{curthm}{#1}
  \@ifundefined{thmtitle@#1}{
    \begin{thmtypealt@#1}
    }{
  \begin{thmtypealt@#1}[\@nameuse{thmtitle@#1}]
  }
    \@nameuse{thm@#1}
  \end{thmtypealt@#1}
  \@ifundefined{thmdone@#1}{
  \@namedef{thmdone@#1}{stated}%
  }{}
}

\newcommand{\thmlabel}[1]{
  \@ifundefined{thmdone@\@nameuse{curthm}}{\label{#1}
    }{\tag*{\eqref{#1}}}
}
\makeatother

\begin{document}
\title{A Simple Proof of a New Set Disjointness with Applications to Data Streams}
\author{
	Akshay Kamath\\
	UT Austin\\\texttt{kamath@cs.utexas.edu} 
	\and
	Eric Price\\
	UT Austin\\\texttt{ecprice@cs.utexas.edu}
	\and
	David P. Woodruff\\
	CMU\\\texttt{dwoodruf@cs.cmu.edu} }
\date{}
\begin{titlepage}
\maketitle
\thispagestyle{empty}

\begin{abstract}
  The multiplayer promise set disjointness is one of the most widely
  used problems from communication complexity in applications. 
  In this problem there are $k$ players with
  subsets $S^1, \ldots, S^k$, each drawn from $\{1, 2, \ldots, n\}$,
  and we are promised that either the sets are (1) pairwise disjoint, or
  (2) there is a unique element $j$ occurring in all the sets, which are 
  otherwise pairwise disjoint. The total communication 
  of solving this problem with constant probability 
   in the blackboard model is $\Omega(n/k)$. 

  We observe for most applications, it instead suffices to look at what we call the 
  ``mostly'' set disjointness problem, which changes case (2) to say
  there is a unique element $j$ occurring in at least half of the sets, and the
  sets are otherwise disjoint. This change gives us a much simpler proof
  of an $\Omega(n/k)$ randomized total communication lower bound, avoiding Hellinger
  distance and Poincare inequalities. Our proof also gives 
  strong lower bounds for high probability protocols, which are much larger than what
  is possible for the set disjointness problem. Using this we 
  show several new results for data streams:

  \begin{enumerate}
  \item for $\ell_2$-Heavy Hitters,
  any $O(1)$-pass streaming algorithm in the 
  insertion-only model for detecting if an $\eps$-$\ell_2$-heavy hitter exists 
  requires $\min(\frac{1}{\eps^2}\log \frac{\eps^2n}{\delta}, \frac{1}{\eps}n^{1/2})$ 
  bits of memory, which is optimal up to a $\log n$ factor. 
  For deterministic algorithms and constant $\eps$,
  this gives an $\Omega(n^{1/2})$ lower bound, improving the prior $\Omega(\log n)$ lower 
  bound. We also obtain lower bounds for Zipfian 
  distributions. 
  \item for $\ell_p$-Estimation, $p > 2$, we show an $O(1)$-pass 
        $\Omega(n^{1-2/p} \log(1/\delta))$ bit lower bound for outputting an $O(1)$-
        approximation with probability $1-\delta$, in the insertion-only model. This is optimal, 
        and the best previous lower bound was $\Omega(n^{1-2/p} + \log(1/\delta))$.           
   \item for low rank approximation of a sparse matrix in $\R^{d\times n}$, if 
   we see the rows of a matrix
         one at a time in the row-order model, each row having $O(1)$ non-zero 
         entries, any deterministic
         algorithm requires $\Omega(\sqrt{d})$ memory to output an $O(1)$-approximate rank-$1$ 
         approximation.
  \end{enumerate}
  Finally, we consider 
  strict and general turnstile streaming models, and show separations
  between sketching lower bounds and non-sketching upper bounds for the heavy 
  hitters problem.
\end{abstract}
\end{titlepage}

\section{Introduction}
Communication complexity is a common technique for establishing lower bounds
on the resources required of problems, such as the memory required of a 
streaming algorithm.  
The multiplayer promise set disjointness is one of the most widely
  used problems from communication complexity in applications, not only in
  data streams \cite{AMS99,BJKS,cks03,g09,j09a,j09,chan2011edit}, but also 
  compressed sensing \cite{p13}, 
  distributed functional monitoring \cite{WZ12,woodruff2014optimal}, distributed learning 
  \cite{g14,k14,b16},
  matrix-vector query
  models \cite{SWYZ19}, voting \cite{MP0W19,MSW20}, and so on. We shall restrict ourselves
  to the study of set disjointness in the number-in-hand communication model, 
  described below, which covers all of the above applications. 
  Set disjointness is also well-studied in the number-on-forehead communication 
  model, see, e.g., \cite{grolmusz1994bns,tesson2003computational,beame2006strong,lee2009disjointness,chattopadhyay2008multiparty,beame2012multiparty,sherstov2012multiparty,sherstov2014communication}, though we will not discuss that model here. 
 
  In the number-in-hand multiplayer promise set disjointness 
  problem there are $k$ players with
  subsets $S^1, \ldots, S^k$, each drawn from $\{1, 2, \ldots, n\}$,
  and we are promised that either:
  \begin{enumerate}
  \item the $S^i$ are pairwise disjoint, or
  \item there is a unique element $j$ occurring in all the sets, which are 
  otherwise pairwise disjoint. 
  \end{enumerate}
  The promise set disjointness problem was posed by Alon, Matias, and Szegedy 
  \cite{AMS99}, who showed an $\Omega(n/k^4)$ total communication bound in the 
  blackboard communication model, where each player's message can be seen by 
  all other players. This total communication bound was then improved to 
  $\Omega(n/k^2)$ by Bar-Yossef, Jayram, Kumar, and Sivakumar~\cite{BJKS}, who 
  further improved this bound to $\Omega(n/k^{1+\gamma})$ for an arbitrarily 
  small constant $\gamma > 0$ in the one-way model of communication. These 
  bounds were further improved by Chakrabarti, Khot, and Sun~\cite{cks03} to 
  $\Omega(n/(k \log k))$ in the general communication model and an optimal 
  $\Omega(n/k)$ bound for $1$-way communication. The optimal $\Omega(n/k)$ 
  total communication bound for general communication was finally obtained in 
  \cite{g09,j09}. 

To illustrate a simple example of how this problem can be used, consider the {\it streaming model}. 
The streaming model is one of the most important models for processing massive datasets. 
One can model a stream as a list of integers $i_1, \ldots, i_m \in [n] = \{1, 2, \ldots, n\}$,
where each item $i \in [n]$ has a frequency $x_i$ which denotes its number of occurrences
in the stream. We refer the reader to \cite{BBDMW02,muthukrishnan2005data} 
for further background on the streaming model of computation. 

An important problem in this model
is computing the $p$-th frequency moment $F_p = \sum_{j=1}^n x_j^p$. To reduce from the promise set disjointness problem, the first player
runs a streaming algorithm on the items in its set, passes the state of the algorithm to the next
player, and so on. The total communication is $k \cdot s$, where $s$ is the amount of memory of the
streaming algorithm. Observe that in the first case of the promise we have $F_p \leq n$, while in 
the second case we have $F_p \geq k^p$. Setting $k = (2n)^{1/p}$ therefore implies an algorithm
estimating $F_p$ up to a factor better than $2$ can solve promise set disjointness and therefore
$k \cdot s = (2n)^{1/p} s = \Omega(n/(2n)^{1/p})$, that is, $s = \Omega(n^{1-2/p})$. For $p > 2$, this is known to
be best possible up to a constant factor \cite{BKSV14}. 

Notice that nothing substantial would change in this reduction if one were to change the second
case in the promise to instead say: (2) there is a unique element $j$ occurring in at least half of 
the sets, and the sets are otherwise disjoint. Indeed, in the above reduction, in one case
we have $F_p \geq (k/2)^p$, while in the second case we have $F_p \leq n$. This recovers the same
$\Omega(n^{1-2/p})$ lower bound, up to a constant factor. We call this new problem ``mostly'' set 
disjointness ($\disj$). 

While it is seemingly inconsequential to consider $\disj$ instead of promise set
disjointness, there are some peculiarities about this
problem that one cannot help but wonder about. 
In the promise set disjointness problem, there is a deterministic
protocol solving the problem with $O(n/k \log k + k)$ bits of communication -- we walk through
the players one at a time, and each indicates if its set size is smaller than $n/k$. Eventually we must
reach such a player, and when we do, that player posts its set to the blackboard. We then ask one other
player to confirm an intersection. Notice that there always must exist a player with a set of size
at most $n/k$ by the pigeonhole principle. On the other hand, for the $\disj$
problem, it does not seem so easy to achieve a deterministic protocol with $O(n/k \log k + k)$ bits
of communication. Indeed, in the worst case we could have up to $k/2$ players posting their entire set
to the blackboard, and still be unsure if we are in Case (1) or Case (2). 

More generally, is there
a gap in the dependence on the error probability of algorithms for promise set disjointness versus $\disj$? Even if one's main interest is in constant error probability protocols, is there anything that can be learned from this new problem?

\subsection{Our Results}\label{sec:results}
We more generally define $\disj$ so that in Case (2), there is an item
occurring in $l = \Theta(k)$ of the sets, though it is still convenient 
to think of $l = k/2$. Our main theorem is that $\disj$ requires $\Omega(n)$
communication to solve deterministically, or even with failure
probability $e^{-k}$. 


\define{thm:main}{Theorem}{%
	$\disj$ with $n$ elements, $k$ players, and $l = ck$ for an absolute 
	constant $c \in (0,1)$ requires $\Omega(\min(n, n \frac{\log 
	(1/\delta)}{k}))$ bits of communication for failure probability $\delta$.
}
\state{thm:main}

This result does not have any restriction on the order of
communication, and is in the ``blackboard model'' where each message
is visible to all other players. We note that as $c \rightarrow 1$, our lower bound
goes to $0$, but for any absolute constant $c \in (0,1)$, we achieve the stated
$\Omega(\min(n, n \frac{\log (1/\delta)}{k}))$ lower bound. We did not explicitly compute
our lower bound as a function of $c$, as $c \rightarrow 1$. 

Notice that for constant $\delta$, Theorem \ref{thm:main} recovers the $\Omega(n/k)$ total
communication bound for promise set disjointness, which was the result of a long sequence of work.
Our proof of Theorem \ref{thm:main} gives a much simpler proof of an $\Omega(n/k)$ total
communication lower bound, avoiding Hellinger distance and Poincare inequalities altogether,
which were the main ingredients in obtaining the optimal $\Omega(n/k)$ lower bound for promise
set disjointness in previous work. Moreover, as far as we are aware, an $\Omega(n/k)$ lower
bound for the $\disj$ problem suffices to recover all of the lower bounds in applications that
promise set disjointness has been applied to. Unlike our work, however, existing lower bounds for promise set
disjointness do not give improved bounds for small error probability $\delta$. Indeed,
it is impossible for them to do so because of the deterministic protocol described above. We 
next use this bound in terms of $\delta$ to obtain the first lower bounds for deterministic
streaming algorithms and randomized $\delta$-error algorithms for a large number of problems. 

We note that other work on deterministic communication lower bounds for streaming, e.g., the work
of Chakrabarti and Kale \cite{CK16}, does not apply here. They study multi-party equality problems
and it is not clear how to use their fooling set arguments to prove a lower bound for $\disj$.
One of the challenges in designing a fooling set is the promise, namely, 
that a single item occurs on a constant fraction of the
players {\it and} all remaining items occur on at most one player. This promise is crucial for the
applications of $\disj$. 


We now formally introduce notation for the data stream model. In the streaming model, an integer vector $x$ 
is initialized to $0^n$ and undergoes a sequence of $L = \textrm{poly}(n)$ updates. The
streaming algorithm is typically allowed one (or a few) passes over the stream, and the goal
is to use a small amount of memory. We cannot afford to store the entire stream since $n$
and $L$ are typically very large. In 
this paper, we mostly restrict our focus to the {\it insertion-only model} where the updates 
to the vector are of the form $x \leftarrow x + \delta$ where $\delta\in \{e_1, 
\dotsc, e_n\}$ is a standard basis vector. There are also the turnstile data stream models 
in which
$x \leftarrow x + \delta$ where $\delta \in \{e_1, \dotsc, e_n, -e_1, \dotsc, -e_n\}$. In the
{\it strict turnstile model} it is promised that $x \geq 0^n$ at all times in the stream,
whereas in the {\it general turnstile model} there are no restrictions on $x$. Therefore,
an algorithm in the general turnstile model works also in the strict turnstile model
and insertion-only models. 

\paragraph{Finding Heavy Hitters.} 
Finding the heavy hitters, or frequent items, 
is one of the most fundamental problems in data streams. 
These are useful in IP routers \cite{ev03}, in association rules and frequent itemsets \cite{as94,son95,toi96,hid99,hpy00}
and databases \cite{fsgmu98,br99,hpdw01}. Finding the heavy hitters is also frequently used as a subroutine
in data stream algorithms for other problems, such as moment estimation \cite{IndykW05}, 
entropy estimation \cite{ChakrabartiCM10,HarveyNO08}, $\ell_p$-sampling \cite{MonemizadehW10}, finding duplicates \cite{GopalanR09}, and so on.
For surveys on algorithms for heavy hitters, 
see, e.g., \cite{CormodeH08,woodruff2016new}. 

In the {\it $\epsilon$-$\ell_p$-heavy hitters problem}, for $p \geq 1$, 
the goal is to find a set $S$ which contains all 
indices $i\in [n]$ for which $\abs{x_i}^p \geq \epsilon^p \norm{p}{x}^p$, and
contains no indices $i \in [n]$ for which $\abs{x_i}^p \leq 
\frac{\epsilon^p}{2} \norm{p}{x}^p$.

The first heavy hitters algorithms were for $p = 1$, given 
by Misra and Gries \cite{MG82}, who achieved $O(\epsilon^{-1})$
words of memory, where a word consists of $O(\log n)$ bits of space. 
Interestingly, their algorithm is {\it deterministic}, i.e., the failure
probability $\delta = 0$. 
This algorithm was rediscovered
by Demaine, L\'opez-Ortiz, and Munro~\cite{demaine2002frequency}, and again by Karp, Shenker, and Papadimitriou~\cite{karp2003simple}. 
Other than these 
algorithms, which are deterministic, there are a number of randomized
algorithms, such as the Count-Min sketch \cite{cormode2005improved}, sticky sampling \cite{mm02},
lossy counting \cite{mm02}, space-saving \cite{MetwallyAA05}, sample and hold \cite{ev03}, multi-stage bloom filters
\cite{cfm09}, and sketch-guided sampling \cite{kx06}. One can also achieve stronger residual error guarantees~\cite{bics10}.

An often much stronger notion than an $\ell_1$-heavy hitter is an $\ell_2$-heavy hitter. 
Consider an $n$-dimensional vector $x = (\sqrt{n}, 1, 1, \ldots, 1)$. The 
first coordinate is an $\ell_2$-heavy hitter with parameter $\epsilon = 
1/\sqrt{2}$, but it is only an $\ell_1$
heavy hitter with parameter $\epsilon = 1/\sqrt{n}$. Thus, the algorithms above would
require at least $\sqrt{n}$ words of memory to find this heavy hitter. 
In \cite{ccf04} this problem was solved by the {\sc CountSketch} algorithm, which provides
a solution to the $\epsilon$-$\ell_2$-heavy hitters problem, and more generally to the 
$\ell_p$-heavy
hitters problem for any $p \in (0,2]$\footnote{For $p < 1$ the 
quantity $\norm{p}{x}$ is not a norm, but it is still a well-defined quantity.}, in $1$-pass
and in the general turnstile model using 
$O\left (\frac{1}{\epsilon^p}\log(n/\delta) \right )$ words of memory. 
For insertion-only streams, this
was recently improved \cite{bciw15,bcinww16} 
to $O \left (\frac{1}{\epsilon^p} \right )$ words of memory for constant
$\delta$, and $O \left (\frac{1}{\epsilon^p} \log(1/\delta) \right )$ in 
general. See also work \cite{LNNT19} on reducing the decoding time for finding the heavy
hitters from the algorithm's memory contents, without sacrificing additional
memory. 

There is also work establishing lower bounds for heavy hitters.  The
works of \cite{bipw11,JST11} establish an
$\Omega \left (\frac{1}{\epsilon^p} \log n \right )$ word lower bound
for any value of $p > 0$ and constant $\delta$, for any algorithm in
the strict turnstile model. This shows that the above algorithms are
optimal for constant $\delta$. Also for $p > 2$, it is known that
solving the $\epsilon$-$\ell_p$-heavy hitters problem even with
constant $\epsilon$ and $\delta$ requires $\Omega(n^{1-2/p})$ words of
memory \cite{BJKS,g09,j09}, and thus $p = 2$ is often considered the
gold standard for space-efficient streaming algorithms since it is the
largest value of $p$ for which there is a poly$(\log n)$ space algorithm.  For
deterministic algorithms computing linear sketches, the work
of~\cite{G092} shows the sketch requires
$\Omega(n^{2-2/p}/\epsilon^2)$ dimensions for $p \geq 1$ (also shown
for $p=2$ by~\cite{CDD09}). This also implies a lower bound for
general turnstile algorithms for streams with several important
restrictions; see also \cite{li2014turnstile,k20}. There is also work
on the related compressed sensing problem which studies small $\delta$
\cite{g13}.

Despite the work above, for all we knew it could be entirely possible that, in the 
insertion-only model, an 
$\epsilon$-$\ell_2$-heavy hitters algorithm could achieve 
$O \left (\frac{1}{\epsilon^2} \right )$ words of memory and 
solve the problem {\it deterministically}, i.e., with $\delta = 0$. In fact, it is well-known
that the above $\Omega(n)$ lower bound for $\epsilon$-$\ell_2$-heavy hitters for linear
sketches does not hold in the insertion-only model. Indeed, by 
running a deterministic algorithm for $\epsilon$-$\ell_1$-heavy hitters,
we have that if $x_i^2 \geq \epsilon^2 \|x\|_2^2$, then $x_i \geq \epsilon \|x\|_2 \geq \frac{\epsilon}{\sqrt{n}} \|x\|_1$, and consequently one can find all $\ell_2$-heavy hitters using
$O \left (\frac{\sqrt{n}}{\epsilon} \right )$ words of memory. Thus, for constant $\epsilon$,
there is a deterministic  $O \left (\sqrt{n} \right )$ words of memory upper bound, but
only a trivial $\Omega \left (1 \right )$ word lower bound. Surprisingly, this factor $\sqrt{n}$ gap was left wide open, and the main question we ask about heavy hitters is:
\begin{center}
{\it Can one deterministically solve $\epsilon$-$\ell_2$-heavy hitters in insertion-only streams in constant memory?}
\end{center}

One approach to solve $\disj$ would be for each player to insert their
elements into a stream and apply a heavy hitters algorithm.  For example, if
$k = \sqrt{n}$, there will be a $\Theta(1)$-$\ell_2$-heavy hitter if
and only if the $\disj$ instance is a YES instance.  For a space-$S$
streaming algorithm, this uses $Sk$ communication to pass the
structure from player to player.  Hence $S \gtrsim n/k = \sqrt{n}$.  In
general:

\define{thm:delta-lb}{Theorem}{%
  Given $\eps\in (\frac{1}{n^{1/p}}, \frac{1}{2})$ and $p \geq 1$, any
  $\delta$-error $r$-pass insertion-only streaming algorithm for
  $\eps$-$\ell_p$-heavy hitters requires 
  $\Omega(\min(\frac{n^{1-1/p}}{r\eps}, n^{1 - 2/p}\frac{\log(1/\delta)}{r \eps^2}))$ bits of space. 
}
\restate{thm:delta-lb}

Most notably, setting $\delta = 0$ and $p = 2$ and $r = O(1)$, this
gives an $\Omega(\sqrt{n}/\eps)$ bound for deterministic $\ell_2$
heavy hitters.  The \textsc{FrequentElements} algorithm~\cite{MG82}
matches this up to a factor of $\log n$ (i.e., it uses this many
words, not bits).
For $n^{-.1} > \delta > 0$, the other term
($\frac{\log(1/\delta)}{r \eps^2}$) is also achievable up to the
bit/word distinction, this time by~\textsc{CountSketch}.  For larger
$\delta$, we note that it takes 
$\Omega(\frac{1}{\eps^2}\log \eps^2 n)$ bits already to encode the output
size. As a result, we show that the existing algorithms are within a $\log n$ factor of optimal.

One common motivation for heavy hitters is that many distributions are
power-law or Zipfian distributions.  For such distributions, the $i$-th
most frequent element has frequency approximately proportional to
$i^{-\zeta}$ for some constant $\zeta$, typically
$\zeta \in (0.5, 1)$~\cite{clauset2009power}.  Such distributions have
significant $\ell_{1/\zeta}$-heavy hitters.  Despite our lower bound
for general heavy hitters, one might hope for more efficient
deterministic/very high probability insertion-only algorithms in this
special case.  We rule this out as well, getting an
$\Omega(\min(n^{1-\zeta}, n^{1-2\zeta}\log(1/\delta)))$ lower bound
for finding the heavy hitters of these distributions (see Theorem~\ref{thm:power-law}).  This again
matches the upper bounds from \textsc{FrequentElements} or
\textsc{CountSketch} up to a logarithmic factor.

To extend our lower bound to power-law distributions, we embed our hard
instance as the single largest and $n/2$ smallest entries of a
power-law distribution; we then insert the rest of the power-law
distribution deterministically, so the overall distribution is
power-law distributed.  Solving heavy hitters will identify whether
this single largest element exists or not, solving the communication
problem.

\define{thm:p-pass-lb}{Theorem}{%
  For any $\eps\in (\frac{1}{n^{1/p}}, \frac{1}{2})$ and $p \geq 1$,
  any $r$-pass deterministic insertion-only streaming algorithm for
  $\eps$-$\ell_p$-heavy hitters must have a space complexity of
  $\Omega(\frac{n^{1-1/p}}{r\eps})$ bits.
}

\paragraph{Frequency Moments.}
We next turn to the problem of estimating the frequency moments $F_p$, 
which in our reduction from
the $\disj$ problem, corresponds to estimating $\|x\|_p^p = \sum_{i=1}^n 
|x_i|^p$. 
Our hard instance for $\disj$ immediately gives us the following theorem:

\begin{theorem}\label{thm:delta-lb-fp}{
  For any constant $\epsilon \in (0,1)$ and $p \geq 2$, any
  $\delta$-error $r$-pass insertion-only streaming algorithm for
  $\eps$-$F_p$-estimation must have space complexity of
  $\Omega(\min(\frac{n^{1-1/p}}{r}, n^{1 - 2/p}\frac{\log(1/\delta)}{r}))$ bits.
}
\end{theorem}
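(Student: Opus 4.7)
The plan is to reduce from $\disj$ to $\eps$-$F_p$-estimation via the standard player-by-player simulation sketched in the introduction for promise set disjointness, invoking Theorem~\ref{thm:main} so as to retain the $\delta$-dependence. Given a $\disj$ instance on universe $[n]$ with $k$ players and threshold $l = ck$, I have each player insert the indicator vector of their set into the stream. In Case~1 (pairwise disjoint) the frequency vector $x$ is $\{0,1\}$-valued on at most $n$ coordinates, so $F_p \le n$; in Case~2 some coordinate $j$ satisfies $x_j \ge ck$, so $F_p \ge (ck)^p$.

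I then choose $k = \Theta(n^{1/p})$ large enough that $(ck)^p$ exceeds $n$ by an $\eps$-dependent constant factor, which is enough so that any $\eps$-$F_p$-estimator with $\eps \in (0,1)$ distinguishes the two cases with failure probability at most $\delta$. An $r$-pass, $S$-bit insertion-only streaming algorithm then yields a blackboard protocol for $\disj$ where on each pass each player reads the current memory state off the board, processes their insertions, and posts the updated state back; this uses at most $rk$ messages of $S$ bits, for total communication $O(rkS)$.

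Combining with Theorem~\ref{thm:main} gives
$$rkS \;=\; \Omega\!\left(\min\!\left(n,\; n\,\frac{\log(1/\delta)}{k}\right)\right),$$
and dividing by $rk$ and substituting $k = \Theta(n^{1/p})$ yields precisely
$$S \;=\; \Omega\!\left(\min\!\left(\frac{n^{1-1/p}}{r},\; \frac{n^{1-2/p}\log(1/\delta)}{r}\right)\right).$$

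I do not anticipate any real technical obstacle here: the reduction is the textbook one of~\cite{AMS99,BJKS}, and the new ingredient is entirely Theorem~\ref{thm:main}, from which the $\log(1/\delta)/r$ improvement is inherited directly. The only sanity check is that for any constant $\eps \in (0,1)$ only a constant-factor gap in $F_p$ is needed, so padding $k$ by an $\eps$- and $p$-dependent constant never changes the $\Theta(n^{1/p})$ scaling.
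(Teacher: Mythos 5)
Your proposal is correct and is precisely the paper's proof: the paper states that Theorem~\ref{thm:delta-lb-fp} ``follows immediately by setting the number of players in $\disj$ to be $\Theta((\epsilon n)^{1/p})$'' and running the standard player-by-player streaming reduction from the introduction, with the $\log(1/\delta)/k$ factor coming from Theorem~\ref{thm:main}. Your choice $k = \Theta(n^{1/p})$ agrees with the paper's $\Theta((\eps n)^{1/p})$ up to a constant since $\eps$ is assumed constant, and your accounting of $r k$ state transfers of $S$ bits each matches.
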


The proof of Theorem \ref{thm:delta-lb-fp} follows immediately by setting the number of 
players in $\disj$ to be $\Theta((\epsilon n)^{1/p})$, and performing the reduction to
$F_p$-estimation described before Section \ref{sec:results}. This improves the 
previous $\Omega((n^{1-2/p} + \log(1/\delta))/r)$ lower bound, which follows from
\cite{BJKS,j09}, as well as a simple reduction from the Equality function \cite{AMS99}, see also \cite{CK16}. 
It matches an upper bound of \cite{BKSV14} for constant $\epsilon$, by repeating their algorithm independently $O(\log(1/\delta))$ times. 
Our lower bound instance shows that to approximate 
$\|x\|_{\infty} = \max_i |x_i|$ of an integer vector, with
$O(\log n)$-bit coordinates in $n$ dimensions, up to an additive
$\Theta(\sqrt{\|x\|_2})$ deterministically, one needs
$\Omega(\sqrt{n})$ memory. This follows from our hard instance. Approximating the $\ell_{\infty}$ norm is an
important problem in streaming, and its complexity was asked about in
Question 3 of \cite{c05}.

\paragraph{Low Rank Approximation.}
Our $\ell_2$-heavy hitters lower 
bound also has applications to {\it deterministic} low rank 
approximation in a stream, a topic of recent interest
\cite{l13,gp14,w14,g16,glp16,h18}. 
Here we see rows 
$A_1, A_2, \ldots, A_n$ of an $n \times d$ matrix $A$ one at a time. 
At the end of 
the stream we should output a projection $P$ onto a rank-$k$ space for which 
$\|A-AP\|_F^2 \leq (1+\epsilon)\|A-A_k\|_F^2$, where $A_k$ is the best rank-$k$ 
approximation to $A$.
 A natural question is if the deterministic 
{\sf FrequentDirections} algorithm of \cite{g16} using $O(dk/\epsilon)$
words of memory 
can be improved when the rows 
of $A$ are $O(1)$-sparse. 
The sparse setting was shown to have faster 
running times in \cite{glp16,h18}, and more efficient randomized communication
protocols in \cite{bwz16}. Via a reduction from our $\disj$ problem, we show a polynomial dependence on $d$ is necessary.

\define{thm:rank}{Theorem}{%
Any $1$-pass deterministic streaming algorithm outputting a rank-$k$ projection 
matrix $P$ providing a $(1+\epsilon)$-approximate rank-$k$ low rank approximation requires $\Omega(\sqrt{d})$ bits of memory, even for $k = 1$, 
$\epsilon = \Theta(1)$, and when each row of $A$ has only a single non-zero entry.
}
\restate{thm:rank}

\paragraph{Algorithms and Lower Bounds in Other Streaming Models.}
We saw above that deterministic insertion-only $\ell_2$ heavy hitters
requires $\Thetat(\sqrt{n})$ space for constant $\eps$.  We now
consider turnstile streaming and linear sketching.

The work of~\cite{G092,CDD09} shows that $\Omega(n)$ space is needed
for general deterministic linear sketching, but the corresponding hard
instances have negative entries.  We extend this in two ways: when
negative entries are allowed, an $\Omega(n)$ lower bound is easy even
in turnstile streaming (for heavy hitters, but not the closely related
$\ell_\infty/\ell_2$ sparse recovery guarantee; see Remark~\ref{remark:turn}).  If
negative entries are not allowed, we still get an $\Omega(n)$ bound on
the number of linear measurements for deterministic linear sketching
(see Theorem~\ref{thm:linear-sketching}).

A question is if we can solve $\ell_2$ heavy hitters
deterministically in the strict turnstile model in $o(n)$ space.  In
some sense the answer is no, due to the near equivalence between
turnstile streaming and linear sketching~\cite{G092,LNW14,AHLW16}, but
this equivalence has significant limitations.  Recent work has shown
that with relatively mild restrictions on the stream, such as a bound
on the length $L$, significant improvements over linear sketching are
possible~\cite{jayaram2018data,k20}.  Can we get that here?
We show that this is indeed possible: streams with $O(n)$ updates can
be solved in $O(n^{2/3})$ space.  While this does not reach the
$\sqrt{n}$ lower bound from insertion-only streams
(Theorem~\ref{thm:p-pass-lb}), it is significantly better
than the $\Omega(n)$ for linear sketches.  In general, we show:

\define{thm:alg}{Theorem}{%
  There is a deterministic $\ell_2$ heavy hitters algorithm for
  length-$L$ strict turnstile streams with $\pm 1$ updates using
  $O((L/\eps)^{2/3})$ words of space.
}
\restate{thm:alg}

Our algorithm for short strict turnstile streams is a combination of
\textsc{FrequentElements} and exact sparse recovery.  With space $S$,
\textsc{FrequentElements} (modified to handle negative updates) gives
estimation error $L/S$, which is good unless $\|x\|_2 \ll L/S$.  But
if it is not good, then $\|x\|_0 \leq \|x\|_2^2 \ll (L/S)^2$.  Hence in that
case $(L/S)^2$-sparse recovery will recover the vector (and hence the
heavy hitters).  Running both algorithms and combining the results
takes $S + (L/S)^2$ space, which is optimized at $L^{2/3}$.

\subsection{Our Techniques}
Our key lemma is that solving $\disj$ on $n$ elements, $k$ items, and
$l = ck$ with probability $1 - e^{-k}$ has $\Omega(n)$ conditional information
complexity for any constant $c \in (0, 1)$.  It is well-known that the
conditional information complexity of a problem lower bounds its 
communication complexity (see, e.g., \cite{BJKS}).

This can then be extended
to $\delta \gg e^{-\Theta(k)}$ using repetition, namely, we can amplify
the success probability of the protocol to $1-e^{-\Theta(k)}$ by independent 
repetition, 
apply our $\Omega(n)$ 
lower bound on the new protocol with $\delta = e^{-\Theta(k)}$, 
and then conclude a lower bound on the original protocol. Indeed, this is how
we obtain our total communication lower bound of $\Omega(n/k)$ for constant
$\delta$, providing a much simpler proof than that of the $\Omega(n/k)$ total
communication lower bound for promise set disjointness in prior work.

Our bound is tight up to a $\log k$ factor.  It can be solved
deterministically with $O(n \log k)$ communication (for each bit, the
first player with that bit publishes it), and with probability
$1-(1-\eps)^{l-1}$ using $O(\eps n \log k)$ communication (only
publish the bit with probability $\eps$).  Setting $\eps = o(1)$, any
$e^{-o(k)}$ failure probability is possible with $o(n \log k)$ communication.

We lower bound $\disj$ using conditional 
information complexity.  Using the direct sum property of conditional 
information cost, analogous to previous work (see, e.g., \cite{BJKS}), it 
suffices to get an $\Omega(1)$ conditional information cost bound for the $n=1$ 
problem $F_k$: we have $k$ players, each of whom receives one bit, and the 
players must distinguish (with probability $1 - e^{-k}$) between at most one 
player having a $1$, and at least $\Omega(k)$ players having $1$s.  In
particular, it suffices to show for correct protocols $\pi$ that
\begin{align}\label{eq:dtv}
  \E_{i \in [t]} \dtv(\pi_0, \pi_{e_i}) = \Omega(1)
\end{align}
where $\pi_0$ is the distribution of protocol transcripts if the
players all receive $0$, and $\pi_{e_i}$ is the distribution if player
$i$ receives a $1$. The main challenge is therefore in bounding this
expression. 

Consider any protocol that does not satisfy~\eqref{eq:dtv}.  We show
that, when $\dtv(\pi_0, \pi_{e_i}) \ll 1$, player $i$ can be
implemented with an equivalent protocol for which 
the player usually does not even observe its input bit.  That is, if
every other player receives a $0$, player $i$ will only observe its bit
with probability $\dtv(\pi_0, \pi_{e_i})$.  This means that most
players only have a small probability of observing their bit.  The
probability that any two players $i, i'$ observe their bits may be
correlated; still, we show that this implies the existence of a large
set $S$ of $ck$ players such that the probability---if every player
receives a zero---that \emph{no} player $i \in S$ observes their bit
throughout the protocol is above $e^{-k}$.  But then
$\dtv(\pi_0, \pi_{e_S}) < 1 - e^{-k}$, so the protocol cannot
distinguish these cases with the desired probability. We now give the
full proof.

\section {Preliminaries}

We use the following measures of distance between distributions in our proofs. 
\begin{definition}
	Let $P$ and $Q$ be probability distributions over the same countable 
	universe $\mathcal{U}$. The total variation distance between $P$ and $Q$ 
	is defined as:
		$\dtv(P, Q) = \frac{1}{2} \norm{1}{P - Q}.$
\end{definition}


In our proof we also use the Jensen-Shannon divergence and Kullback-Liebler 
divergence. We define these notions of divergence here:
\begin{definition}
	Let $P$ and $Q$ be probability distributions over the same discrete
	universe $\mathcal{U}$. The Kullback-Liebler divergence or KL-divergence 
	from $Q$ to $P$ is defined as:
	$D_{KL}(P, Q) = \sum_{x\in \mathcal{U}} P(x)\log(\frac{P(x)}{Q(x)}).$
	This is an asymmetric notion of divergence. The Jensen-Shannon 
	divergence between two distributions $P$ and $Q$ is the symmetrized 
	version of the KL divergence, defined as: 
	$D_{JS}(P, Q) = \frac{1}{2}(D_{KL}(P, Q) + D_{KL}(Q, P).$
\end{definition}
From Pinsker's inequality, for any two distributions $P$ and $Q$, 
$D_{KL}(P, Q) \geq \frac{1}{2} \dtv^2(P, Q)$.

In the multiparty communication model we consider $k$-ary 
functions $F: \mathcal{L} \rightarrow\mathcal{Z}$ where $\mathcal{L} \subseteq 
\mathcal{X}_1 \times \mathcal{X}_2 \times \cdots \times \mathcal{X}_k$. There 
are $k$ parties(or players) who receive inputs $X_1, \dotsc, X_k$ which are 
jointly distributed according to some distribution $\mu$. We consider protocols 
in the blackboard model where in any protocol $\pi$ players speak in any order 
and 
each player broadcasts their message to all other players. 
So, the message of player $i$ is a function of the messages they receive, their 
input and randomness i.e., $m_i = M_i(X_i, m_{i-1}, R_i)$. The final player's 
message is the output of the protocol.

The communication cost of a multiparty protocol $\pi$ is the sum of the lengths 
of the individual messages $\norm{}{\pi} = \sum \abs{M_j}$.
A protocol $\pi$ is a $\delta$-error protocol for the function $f$ if for every 
input $x\in \mathcal{L}$, the output of the protocol equals $f(x)$ with 
probability 
$1-\delta$. 
The randomized communication complexity of $f$, denoted  
$R_{\delta}(f)$, is the cost of the cheapest randomized protocol that computes 
$f$ correctly on every input with error at most $\delta$ over the randomness of 
the protocol. 

The distributional communication complexity of the function $f$ for error 
parameter $\delta$ is denoted as $D_{\mu}^{\delta}(f)$. This is the 
communication cost of the cheapest deterministic protocol which computes the 
function $f$ with error at most $\delta$ under the input distribution $\mu$.
By Yao's minimax theorem, $R_{\delta}(f) = \max_{\mu} D_{\mu}^{\delta}(f)$ and 
hence it suffices to prove a lower bound for a hard distribution $\mu$.
In our proofs, we bound the conditional information complexity of a function in 
order to prove lower bounds on $R_\delta(f)$. We define this notion below.
\begin{definition}
	Let $\pi$ be a randomized protocol whose inputs belong to 
	$\mathcal{K}\subseteq \mathcal{X}_1 \times \mathcal{X}_2 \dotsc \times 
	\mathcal{X}_k$. Suppose $((X_1, X_2, \dotsc, X_k), D)\sim \eta$ where 
	$\eta$ is a distribution over $\mathcal{K}\times\mathcal{D}$ for some 
	set $\mathcal{D}$. The \textbf{conditional information cost} of $\pi$ 
	with respect to $\eta$ is defined as:
	$cCost_{\eta}(\pi) = I(X_1, \dotsc, X_k; \pi(X_1, \dotsc, X_k) \mid D).$
\end{definition}
\begin{definition}
	The $\delta$-error \textbf{conditional information complexity} of $f$ 
	with respect to $\eta$, denoted $CIC_{\eta, \delta}(f)$ is defined as 
	the 
	minimum conditional information cost of a $\delta$-error protocol for 
	$f$ with respect to $\eta$.
\end{definition}
In \cite{BJKS} it was shown that the randomized communication complexity 
of a function is at least the conditional information complexity of the 
function $f$ with respect to any input distribution $\eta$. 
\begin{propn}[Corollary 4.7 of \cite{BJKS}]\label{cor:rand-cic}
	Let $f: \mathcal{K}\rightarrow \{0,1\}$, and let $\eta$ be a 
	distribution over $\mathcal{K}\times \mathcal{D}$ for some set 
	$\mathcal{D}$. Then, $R_\delta(f)\geq CIC_{\eta, \delta}(f)$.
\end{propn}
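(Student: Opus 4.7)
The plan is to take any $\delta$-error randomized protocol $\pi$ for $f$ with worst-case communication cost $R_\delta(f)$ and show that its conditional information cost under $\eta$ is at most $\|\pi\|$. Since a protocol with worst-case error at most $\delta$ automatically has distributional error at most $\delta$ under every input distribution, $\pi$ qualifies as a $\delta$-error protocol under $\eta$, so its conditional information cost is at least $CIC_{\eta,\delta}(f)$. The proposition then follows by taking the infimum over all such $\pi$.

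The core inequality to establish is
\begin{equation*}
I(X_1,\dots,X_k;\,\pi(X_1,\dots,X_k)\mid D) \;\leq\; \|\pi\|.
\end{equation*}
Writing $X=(X_1,\dots,X_k)$ and $\Pi = \pi(X)$, I would chain the following bounds: $I(X;\Pi\mid D)\leq H(\Pi\mid D)\leq H(\Pi)\leq \E[|\Pi|]\leq \|\pi\|$. The first step uses $I(A;B\mid C)\leq H(B\mid C)$, valid for any random variables. The second step uses that conditioning can only reduce entropy. The fourth step is the definition of $\|\pi\|$ as a worst-case transcript length, which bounds the expected length from above. The third step is the only nontrivial one, and it is where I anticipate the main obstacle.

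For the entropy-vs-length bound $H(\Pi)\leq \E[|\Pi|]$, I would invoke Shannon's noiseless coding theorem once I verify that the transcripts of a valid blackboard protocol form a prefix-free set. This is the standard protocol-tree argument: at every prefix of a transcript, the protocol specifies (as a function of what has been written so far) which player speaks next and how their message is parsed, so no transcript is a prefix of another. This gives prefix-freeness conditional on the players' private randomness $R$; one then applies Shannon to get $H(\Pi\mid R)\leq \E[|\Pi|\mid R]$, averages over $R$, and uses $H(\Pi)\leq H(\Pi,R)=H(R)+H(\Pi\mid R)$. To avoid the extra $H(R)$ term, the cleanest route is to observe that public randomness can be folded into the input distribution (or conditioned on) without affecting $I(X;\Pi\mid D)$, and that private randomness contributes to the protocol's length tally only through the actual messages sent, not through $R$ itself.

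Assembling the pieces: given any $\delta$-error protocol $\pi$, we get $cCost_\eta(\pi)\leq \|\pi\|\leq R_\delta(f)$, and minimizing the left-hand side over all $\delta$-error protocols yields $CIC_{\eta,\delta}(f)\leq R_\delta(f)$, as required. The main conceptual obstacle is handling private randomness carefully so that Shannon's source-coding inequality applies cleanly to the (randomized) transcript; everything else is direct application of standard information-theoretic inequalities already assumed in the Preliminaries.
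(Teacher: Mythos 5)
The paper does not actually prove this proposition; it is imported verbatim from \cite{BJKS} (their Corollary 4.7), so there is no in-paper proof to compare your argument against. Your argument is, in substance, the standard one underlying that corollary: fix a worst-case $\delta$-error protocol $\pi$ of cost (close to) $R_\delta(f)$, note that it is in particular a $\delta$-error protocol for any input distribution so it competes in the infimum defining $CIC_{\eta,\delta}(f)$, and then bound
\[
I(X;\Pi\mid D)\le H(\Pi\mid D)\le H(\Pi)\le \E[|\Pi|]\le \|\pi\|,
\]
which is exactly the chain BJKS use (entropy upper-bounds mutual information, conditioning decreases entropy, source coding for a prefix-free code, worst case bounds average). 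So the approach is correct and essentially standard.

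One simplification worth making: your worry about private randomness in the step $H(\Pi)\le\E[|\Pi|]$ is a non-issue, and your proposed workaround via $H(\Pi)\le H(R)+H(\Pi\mid R)$ actually introduces an unwanted $H(R)$ term and should be dropped. In the blackboard model the identity of the next speaker and whether the protocol has terminated must be computable by all parties from the transcript alone; therefore the set of realizable transcripts is prefix-free \emph{unconditionally}, regardless of the players' private coins. Private randomness only affects \emph{which} transcript is produced, not the shape of the transcript tree. So Shannon's source-coding inequality applies directly to the random variable $\Pi$, and the fourth and fifth inequalities go through with no conditioning on $R$ needed. With that detour removed, your proof is clean and matches the argument this proposition is citing.
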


\paragraph{Direct Sum.}
Per~\cite{BJKS}, conditional information complexity obeys a Direct Sum
Theorem condition under various conditions.
The Direct Sum Theorem of~\cite{BJKS} allows us to reduce a $t$-player
conditional information complexity problem with an $n$-dimensional input
to each player to a $t$-player conditional information complexity with
a $1$-dimensional input to each player. This theorem applies when the
function is ``decomposable'' and the input distribution is
``collapsing''. We define both these notions here.
\begin{definition} 
	Suppose $\mathcal{L} \subseteq \mathcal{X}_1\times \mathcal{X}_2 \times 
	\dotsc \times \mathcal{X}_t$ and $\mathcal{L}_n\subseteq \mathcal{L}^n$. 
	A function $f: \mathcal{L}_n \rightarrow \{0,1\}$ 
	is $g$\textbf{-decomposable} with primitive $h:\mathcal{L}\rightarrow 
	\{0,1\}$ if it can be written as:
	\[
		f(X_1, \dotsc, X_t) = g(h(X_{1,1}, \dotsc, X_{1,t}), \dotsc, 
		h(X_{n,1}, \dotsc, X_{n,t}))
	\]
	for $g:\{0,1\}^n\rightarrow \{0,1\}$.
\end{definition}

\begin{definition}
	Suppose $\mathcal{L} \subseteq \mathcal{X}_1\times \mathcal{X}_2 \times 
	\dotsc \times \mathcal{X}_t$ and $\mathcal{L}_n\subseteq \mathcal{L}^n$. 
	A distribution $\eta$ over $\mathcal{L}_n$ is a \textbf{collapsing  
	distribution} for $f: \mathcal{L}_n \rightarrow \{0,1\}$ with 	 
	respect to $h:\mathcal{L}\rightarrow \{0,1\}$ if for all $Y_1, \dotsc, 
	Y_n$ in the support of $\eta$, for all $y\in \mathcal{L}$ and for all 
	$i\in [n]$, $f(Y_1, \dotsc, Y_{i-1}, y, Y_{i+1}, \dotsc, Y_n) = h(y).$
\end{definition}
We state the Direct Sum Theorem for conditional information complexity below. 
The proof of this theorem in \cite{BJKS} applies to the blackboard model of 
multiparty communication. We state this in the most general form here and then 
show that it may be applied to the hard distribution $\eta_0$ which we choose in 
Section \ref{sec:comm-lb}.
\begin{theorem}[Multiparty version of Theorem 5.6 of 
\cite{BJKS}]\label{thm:dirsum-formal}
	Let $\mathcal{L} \subseteq \mathcal{X}_1\times \mathcal{X}_2 \times 
	\dotsc \mathcal{X}_t$ and let $\mathcal{L}_n \subseteq \mathcal{L}^n$. 
	Suppose that the following conditions hold:
	\begin{itemize}
		\item[(i)] $f: \mathcal{L}_n \rightarrow \{0,1\}$ is a 
		decomposable function with primitive $h:\mathcal{L}\rightarrow 
		\{0,1\}$,
		\item[(ii)] $\zeta$ is a distribution over $\mathcal{L}\times 
		\mathcal{D}$, such that for any $d\in \mathcal{D}$ 
		the distribution $(\zeta \mid D=d)$ is a product distribution,
		\item[(iii)] $\eta = \zeta^n$ is supported on 
		$\mathcal{L}_n\times \mathcal{D}^n$, and
		\item[(iv)] the marginal probability distribution of $\eta$ over 
		$\mathcal{L}_n$ is a collapsing distribution for $f$ with respect 
		to $h$.
	\end{itemize}
	Then $CIC_{\eta, \delta}(f) \geq n\cdot CIC_{\zeta, \delta}(h).$
\end{theorem}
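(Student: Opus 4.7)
The plan is to follow the standard BJKS direct-sum template, adapted to the blackboard multiparty setting. Write the $n$-fold input as $X = (X^1,\dots,X^n)$ with $X^j = (X^j_1,\dots,X^j_t)$ delivered to the $t$ players, and $D = (D^1,\dots,D^n)$; under $\eta = \zeta^n$ the pairs $(X^j,D^j)$ are i.i.d.\ across $j$. Fix any $\delta$-error protocol $\pi$ for $f$ and let $\Pi$ denote its random transcript. The first step is to break the conditional information cost into per-coordinate pieces: since the $X^j$ are independent given $D$, one has $H(X\mid D) = \sum_j H(X^j\mid D)$, and combining this with subadditivity $H(X\mid \Pi,D) \leq \sum_j H(X^j\mid \Pi,D)$ yields $I(X;\Pi\mid D) \geq \sum_j I(X^j;\Pi\mid D)$.

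The second step is an embedding: from $\pi$ I build a protocol $\pi_j$ for the primitive $h$ whose conditional information cost is at most $I(X^j;\Pi\mid D)$. Given $Y\sim\zeta$ with player $i$ holding $Y_i$, and with the conditioning variable $D^j$ available, the players use public randomness to draw $D^{j'}$ for every $j'\neq j$. Condition (ii) --- that $(\zeta\mid D = d)$ is a product distribution across the $t$ players --- then lets each player $i$ locally sample its own coordinate $X^{j'}_i$ from the induced conditional marginal using private randomness, never needing to see any other player's bits. Player $i$ now runs $\pi$ as though its input were $(X^1_i,\dots,Y_i,\dots,X^n_i)$ with $Y_i$ sitting in position $j$. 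By the collapsing condition (iv), the output of $\pi$ on this composite input equals $h(Y)$ for every choice of filler, so the overall error probability of $\pi_j$ is still at most $\delta$.

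To close out, note that the public and private randomness used to fabricate the filler coordinates is independent of $(Y,D^j)$, so the transcript of $\pi_j$ has the same joint law with $(Y,D^j)$ that $\Pi$ has with $(X^j,D^j)$; hence the conditional information cost of $\pi_j$ equals $I(X^j;\Pi\mid D^j)$. Because $(X^j,D^j)$ is independent of $D^{-j}$ under $\zeta^n$, adding $D^{-j}$ to the conditioning can only increase the mutual information, giving $I(X^j;\Pi\mid D^j) \leq I(X^j;\Pi\mid D)$. Each per-coordinate term is therefore at least $CIC_{\zeta,\delta}(h)$, and summing the $n$ coordinates yields the theorem. The main subtlety lies in step two: arranging the embedding so that no player ever has to know another player's input is exactly what condition (ii) makes possible, and this is the step where moving from the two-party BJKS statement to the $t$-player blackboard model needs the most care.
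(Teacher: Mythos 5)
The paper does not actually prove this theorem; it is stated as a multiparty adaptation of Theorem 5.6 of BJKS and the authors simply note that the BJKS proof carries over to the blackboard model. Your proof correctly reproduces the standard BJKS direct-sum template --- the superadditivity step $I(X;\Pi\mid D)\geq\sum_j I(X^j;\Pi\mid D)$ via independence of the $X^j$ given $D$ plus subadditivity of conditional entropy, followed by the embedding argument where condition (ii) lets each player privately sample its own filler coordinates and the collapsing condition (iv) guarantees correctness --- so it is the right argument; one small point worth making explicit is that your inequality $I(X^j;\Pi\mid D^j)\leq I(X^j;\Pi\mid D)$ is not a general monotonicity of conditioning but follows from the chain rule using $I(X^j;D^{-j}\mid D^j)=0$, which in turn relies on the independence of $(X^j,D^j)$ from $D^{-j}$ under $\zeta^n$ that you already noted.
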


\section{Communication Lower Bound for Mostly Set Disjointness}\label{sec:comm-lb}
Let $[n] = \{1, 2, \ldots, n\}$. We let $H(X)$ denote the entropy of a random variable
$X$, and $I(X ; Y) = H(X) - H(X | Y)$ be the mutual information. 

\subsection{The Hard Distribution}

\begin{definition}
	Denote by $\disj_{n, l, t}$, the multiparty Mostly 
	Set-Disjointness problem in which each player $j\in[t]$ receives an 
	$n$-dimensional input vector $X_j = (X_{j,1}, \dotsc, X_{j,n})$ where 
	$X_{j,i}\in \{ 0,1\}$ and the input to the protocol falls into either of 
	the following cases:
	\begin{itemize}
		\item \textbf{NO:} For all $i\in[n]$, $\sum_{j\in [t]} 
		X_{j,i}\leq 1$
		\item \textbf{YES:} There exists a unique $i\in[n]$ such that 
		$\sum_{j\in [t]} X_{j,i} = l$ and for all other $i'\neq i,  
		\sum_{j\in [t]} X_{j,i'}\leq 1$.
	\end{itemize}
        The final player must output $1$ 
	if the input is in the YES case and $0$ in the NO case. 
\end{definition}
Let $\mathcal{L}\subset \{0, 1\}^t$ be the set of valid inputs along one index 
in $[n]$ for $\disj_{n,l,t}$, i.e., the set of elements in $x \in 
\{0, 1\}^t$ with $\sum_{j\in [t]} x_j \leq 1$ or $\sum_{j\in [t]} x_j = l$. 
Let $\mathcal{L}_n\subset \mathcal{L}^n$ denote the set of valid inputs to the 
$\disj_{n,l,t}$ function.

We define $\disj_{n,l,t}: \mathcal{L}_n \rightarrow \{0,1\}$ 
as:
$\disj_{n, l, t}(X_1, \dotsc, X_t) = \bigvee_{i\in [n]} F_{l, t}(X_{1,i}, 
\dotsc, X_{t, i})$
where $F_{l,t}:\mathcal{L}\rightarrow \{0,1\}$ defined as:
$F_{l,t}(x_1, \dotsc, x_t) = \bigvee_{\substack{S\subseteq [t]\\\abs{S} = 
		l}} \bigwedge_{j\in S} x_j.$
This means $\disj_{n, l, t}$ is \textsf{OR}-decomposable into $n$ copies 
of $F_{l,t}$ and we may hope to apply a direct sum theorem with an appropriate 
distribution over the inputs.

In order to prove a lower bound on the conditional information complexity, we 
need to define a ``hard'' distribution over the inputs to $\disj_{n,l,t}$. We 
define the distribution $\eta$ over $\mathcal{L}_n\times \mathcal{D}^n$ where 
$\mathcal{D} = [t]$ as follows:
\begin{itemize}
	\item For each $i\in [n]$ pick $D_i\in[t]$ uniformly at random and sample
	$X_{D_i,i}$ uniformly from $\{0 ,1\}$  and for all $j'\neq D_i$ set 
	$X_{j', i}=0$. 
	\item Pick $I\in [n]$ uniformly at random and $Z\in \{0,1\}$
	\item if $Z=1$, pick a set $S\subseteq [t]$ such that $\abs{S}=l$ 
	uniformly at random and for all $j\in S$ set $X_{j,I}=1$ and for all 
	$j\notin S$, set $X_{j,I}=0$
\end{itemize} 

Let $\mu_0$ denote the distribution for each $i\in [n]$ conditioned on $Z = 0$. 
For any $d\in [t]$, when $D = d$, the conditional distribution over 
$\mathcal{L}$ is the uniform distribution over $\{ 0, e_d\}$ and hence a 
product distribution. Let $\eta_0$ be the distribution $\eta$ conditioned on 
$Z=0$. Clearly, $\eta_0 = \mu_0^n$.

This definition of $\disj_{n,l,t}$ and the hard distribution $\eta_0$
allows us to apply the Direct Sum theorem (Theorem
\ref{thm:dirsum-formal}) of \cite{BJKS}.  Note that: (i)
$\disj_{n,l,t}$ is OR-decomposable by $F_{l,t}$, (ii) $\mu_0$ is a
distribution over $\mathcal{L} \times [t]$ such that the marginal
distribution $(\mu_0 \mid D = d)$ over $\mathcal{L}$ is uniform
over $\{0, e_d\}$(and hence a product distribution), (iii)
$\eta_0 = \mu_0^n$, and (iv) since $\disj_{n,l,t}$ is OR-decomposable
and $\eta_0$ has support only on inputs in the NO case, $\eta_0$ is a
collapsing distribution for $\disj_{n,l,t}$ with respect to $F_{l,t}$.  Hence:
\begin{align}\label{eq:disj-dirsum}
    CIC_{\eta_0, \delta}(\disj_{n,l,t}) \geq n\cdot CIC_{\mu_0, 
      \delta}(F_{l,t})
\end{align}

\subsection{Information Cost for a Single Bit}
A key lemma for our argument is that the players can be implemented so
that they only ``observe'' their input bits with small probability.
The model here is that each player's input starts out hidden, but they
can at any time choose to observe their input.  Before they observe
their input, however, all their decisions (including messages sent and
choice of whether to observe) depend on the transcript and randomness,
but not the player's input.

In this section we use $\pi$ to denote the protocol in consideration and abuse 
notation slightly by using $\pi_x$ to denote the distribution of the transcript 
of the protocol $\pi$ on input $x$. 

\begin{definition}
  Any (possibly multi-round) communication protocol involving $n$
  players, where each player receives one input bit, is defined to be
  a ``clean'' protocol with respect to player $i$ if, in each round,
  \begin{enumerate}
  \item if player $i$ has previously not ``observed'' his input bit,
    he ``observes'' his input bit with some probability that is a
    function only of the previous messages in the protocol,
  \item if player $i$ has not observed his input bit in this round or
    any previous round, then his message distribution depends only on
    the previous messages in the protocol but not his input bit, and
  \item if player $i$ \emph{has} observed his input bit in this round
    or any previous round, then---for a fixed value of the previous
    messages in the protocol---his distribution of messages on input
    $0$ and on input $1$ are \emph{disjoint}.
  \end{enumerate}    
\end{definition}

\begin{figure}[h!]
\centering
\includegraphics[width=\textwidth]{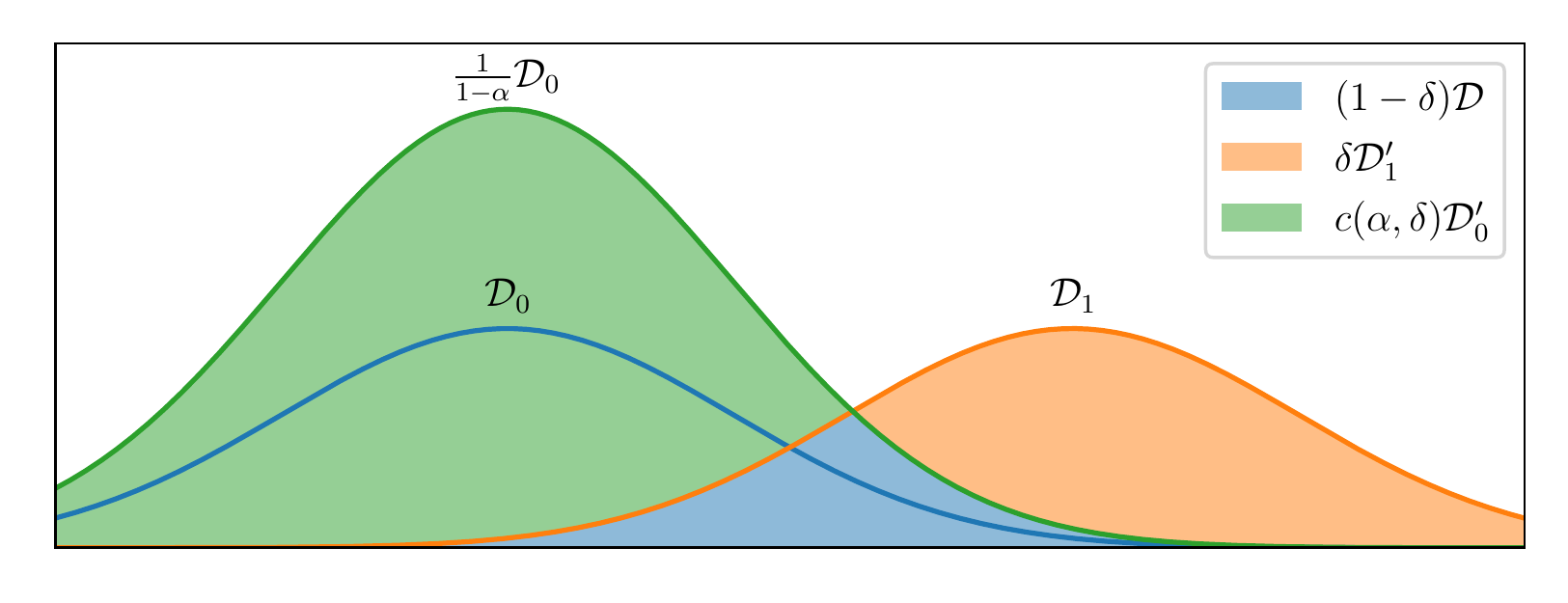}
\caption{An illustration of Lemma~\ref{lem:distributions}, given a parameter $\alpha$ and pair of distributions $(\cD_0,\cD_1)$.  We set $(1-\delta) \cD$ to be the overlap between $\cD_1$ and $\frac{1}{1-\alpha}\cD_0$, then $\cD_0'$ and $\cD_1'$ to be proportional to the remainder of $\frac{1}{1-\alpha}\cD_0$ and $\cD_1$, respectively.  These $\cD_0'$ and $\cD_1'$ are disjoint.}
\label{fig:dist}
\end{figure}

We start off by proving a lemma about decomposing any two arbitrary
distributions into one ``common'' distribution and two disjoint
different distributions. This lemma will enable us to show that any
communication protocol can be simulated in a clean manner.  

\begin{lemma}\label{lem:distributions}
	Let $\cD_0, \cD_1$ be two distributions, and $\alpha \in [0, 1]$. There
	exist three distributions $\cD, \cD_0', \cD_1'$ and a parameter
	$\delta \in (0, 1)$ such that:
	$\cD_0 = (1-\alpha)(1-\delta)\cD +  (1 - (1-\alpha)(1-\delta))\cD_0', 
	\cD_1 = (1-\delta)\cD + \delta \cD_1',$
	and $\cD_0'$ has a disjoint support from $\cD_1'$.
\end{lemma}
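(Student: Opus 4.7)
My plan is to construct the decomposition directly from a pointwise minimum, following the figure. Define the measure $M$ on the common sample space by
\[
M(x) = \min\!\bigl(\cD_1(x),\, \tfrac{1}{1-\alpha}\cD_0(x)\bigr),
\]
set $1-\delta = \sum_x M(x)$, and let $\cD = M/(1-\delta)$, which is a valid probability distribution. The key pointwise properties built into $M$ are $M \le \cD_1$ and $(1-\alpha) M \le \cD_0$ everywhere, by the scaling inside the minimum.

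Next I would define the residual distributions
\[
\cD_1' = \frac{\cD_1 - (1-\delta)\cD}{\delta}, \qquad \cD_0' = \frac{\cD_0 - (1-\alpha)(1-\delta)\cD}{1 - (1-\alpha)(1-\delta)}.
\]
Noting that $(1-\delta)\cD = M$, the two pointwise inequalities above give $(1-\delta)\cD(x) \le \cD_1(x)$ and $(1-\alpha)(1-\delta)\cD(x) \le \cD_0(x)$, so both numerators are nonnegative. Their total masses are $\delta$ and $1 - (1-\alpha)(1-\delta)$ respectively, so after the stated normalization both are probability distributions, and the two identities in the lemma are just $\cD_1 = M + (\cD_1 - M)$ and $\cD_0 = (1-\alpha)M + (\cD_0 - (1-\alpha)M)$ rewritten with $\cD$, $\cD_0'$, $\cD_1'$ in place of the unnormalized pieces.

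The main step I actually need to check is the disjoint-support condition, and I expect it to fall out of the pointwise-minimum construction. At any $x$, one of the two arguments of the $\min$ is attained: if $M(x) = \cD_1(x)$ then $\cD_1'(x) \propto \cD_1(x) - M(x) = 0$, while if $M(x) = \tfrac{1}{1-\alpha}\cD_0(x)$ then $\cD_0'(x) \propto \cD_0(x) - (1-\alpha)M(x) = 0$. Hence at every $x$ at least one of $\cD_0'(x), \cD_1'(x)$ vanishes, giving the required disjointness. The only mild subtleties are the boundary cases $\delta \in \{0,1\}$ (when $\cD_1 \le \tfrac{1}{1-\alpha}\cD_0$ everywhere, or when the two measures are already disjoint after scaling) and $\alpha = 1$ (where $\cD_0' = \cD_0$, $\cD_1' = \cD_1$ works trivially provided their supports are disjoint, and otherwise the lemma is vacuous); these can be addressed by inspection without affecting the main construction.
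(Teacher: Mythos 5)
Your construction is exactly the paper's: you take $M(x) = \min\bigl(\cD_1(x),\tfrac{1}{1-\alpha}\cD_0(x)\bigr)$, set $1-\delta$ to its total mass, let $\cD$ be its normalization, and peel off the normalized residuals as $\cD_0'$ and $\cD_1'$; the disjointness then follows because at each $x$ at least one residual is zero. Your derivation of nonnegativity, normalization, and disjointness is correct, so this is essentially the paper's proof. One small inaccuracy in your treatment of the corner cases: when $\alpha=1$ and $\supp(\cD_0)\cap\supp(\cD_1)\neq\emptyset$, the lemma is not vacuous --- the paper handles it by taking $\delta=0$, $\cD=\cD_1$, $\cD_0'=\cD_0$, and choosing $\cD_1'$ arbitrary with support disjoint from $\cD_0$ (its coefficient is zero so only the disjointness constraint matters); your proposed assignment $\cD_1'=\cD_1$ with $\delta=1$ simply fails there rather than the claim being empty. (Both you and the paper are a bit loose about the stated range $\delta\in(0,1)$ versus the $\delta\in\{0,1\}$ boundary cases, but this does not affect how the lemma is used.)
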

We refer the reader to Figure \ref{fig:dist} for an illustration corresponding to Lemma
\ref{lem:distributions}. 
\begin{proof}
  We begin with two special cases.  If $\alpha=1$, then setting
  $\delta = 0$ allows us to set $\cD_0'=\cD_0$, $\cD =
  \cD_1$. $\cD_1'$ may be any arbitrary distribution that has disjoint
  support from $\cD_0'$.  If
  $\supp(\cD_0)\cap \supp(\cD_1) = \emptyset$, we may set
  $\delta = 1$, $\cD_0' = \cD_0$ and $\cD_1'=\cD_1$.
	
	So it suffices to consider the case where $\alpha < 1$ and  
	$\supp(\cD_0)\cap \supp(\cD_1) \neq \emptyset$. Let $\cD$ and $\delta$ 
	be such that $\cD(x) = \frac{1}{1-\delta}  
	\min(\frac{1}{1-\alpha}\cD_0(x),\cD_1(x))$ 
	is a distribution over the 
	support of $\cD_0$. Then, it suffices to define:
	\[
	\cD_0'(x)= 
	\begin{cases}
	0\hspace{0.25in}& \text{if } \frac{1}{1-\alpha} \cD_0(x) \leq \cD_1(x) \\
	\frac{1}{1-(1-\alpha)(1-\delta)} (\cD_0(x) - (1-\alpha)\cD_1(x)) 
	\hspace{0.25in}  & \text{otherwise}
	\end{cases}
	\]
	and we define:
	\[
	\cD_1'(x)= 
	\begin{cases}
	0 \hspace{0.75in}& \text{if } \frac{1}{1-\alpha} \cD_0(x) \geq \cD_1(x) 
	\\
	\frac{1}{\delta} (\cD_1(x) - \frac{\cD_0(x)}{1-\alpha} ) 
	\hspace{0.75in}   & 
	\text{otherwise}
	\end{cases}
	\]
	If $\alpha = 1$, we set $\cD_0' = \cD_0$, $\cD(x) = 
	\frac{1}{1-\delta}\min (\cD_1(x), \cD_0(x))$ where $\delta$ is a scaling 
	term which ensures that $\cD(x)$ is a valid distribution. 
\end{proof}

\begin{lemma}\label{lem:clean-protocol}
	Consider any (possibly multi-round) communication protocol $\pi$ where 
	each player receives one input bit. Then for any player $i$, the  
	protocol can simulated in a manner that is ``clean'' with respect to 
	that player.
\end{lemma}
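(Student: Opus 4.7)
The plan is to build the clean simulation round-by-round, applying Lemma~\ref{lem:distributions} with parameter $\alpha=0$ in every round where player $i$ speaks. With $\alpha=0$, the lemma's decomposition becomes symmetric: $\cD_0 = (1-\delta)\cD + \delta\cD_0'$ and $\cD_1 = (1-\delta)\cD + \delta\cD_1'$, with the \emph{same} coefficient on the common part $\cD$ for both inputs. Moreover, one reads off from the explicit construction in the proof that $\delta = \dtv(\cD_0,\cD_1)$, a function of the pair $(\cD_0,\cD_1)$ alone. This symmetry is exactly what is needed so that the ``did not observe'' event can have the same probability conditional on either value of the input bit.

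The simulation then works as follows. Player $i$ carries a hidden flag taking values ``observed''/``unobserved'', initialized to unobserved. In any round where player $i$ speaks with history $h$, let $\cD_b^h$ denote the message distribution of player $i$ on input $b$ in the original protocol, and let $\cD^h, (\cD_0^h)', (\cD_1^h)', \delta_h$ be produced by applying Lemma~\ref{lem:distributions} with $\alpha=0$ to $(\cD_0^h,\cD_1^h)$. If player $i$ is unobserved, they flip an independent coin of bias $\delta_h$ (which depends only on $h$): on tails they remain unobserved and send a message drawn from $\cD^h$; on heads they transition to observed, learn $x_i$, and send a message drawn from $(\cD_{x_i}^h)'$. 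If player $i$ is already observed, they simply sample from $\cD_{x_i}^h$ exactly as in the original protocol.

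Checking the three clean-protocol conditions is then direct. Condition (1) holds because $\delta_h = \dtv(\cD_0^h,\cD_1^h)$ depends only on the history $h$. Condition (2) holds because, when unobserved, the message is drawn from the bit-independent distribution $\cD^h$. Condition (3) holds at the observation round since $(\cD_0^h)'$ and $(\cD_1^h)'$ have disjoint supports by Lemma~\ref{lem:distributions}; at every subsequent round it holds \emph{automatically}, because the observation-round message already pins the transcript down to a unique value of $x_i$, so any fixed extension of that history has zero probability under the other input and the conditional message distribution on the wrong input is vacuous. Fidelity to the original protocol follows by induction on rounds from the identity $(1-\delta_h)\cD^h + \delta_h(\cD_b^h)' = \cD_b^h$: in each round, conditional on the history, player $i$'s message distribution on input $b$ is exactly $\cD_b^h$, matching the original, and the other players' behavior is unchanged.

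The only delicate point is the cascading argument for condition (3) at rounds strictly after observation, which relies on the disjoint-support guarantee of Lemma~\ref{lem:distributions} propagating through histories to force the bit to be determined. Everything else is bookkeeping enabled by the symmetric $\alpha=0$ decomposition.
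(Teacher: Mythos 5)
The gap is exactly where you flagged it, and it is fatal to the $\alpha=0$ approach. Your construction handles an already-observed player at round~$r$ by having them sample from $\cD_{x_i}^h$, and you claim condition~(3) then holds ``automatically'' because the observation-round message pinned down $x_i$. But that pinning does not happen: the common distribution $\cD^h$ from Lemma~\ref{lem:distributions} generally overlaps both $(\cD_0^h)'$ and $(\cD_1^h)'$ (for example $\cD_0'(m)>0$ and $\cD(m)>0$ whenever $\cD_0(m)>\cD_1(m)>0$). So a transcript $h_r$ in which player $i$ sent some message at an earlier round $r'$ is typically consistent with \emph{all three} of: observed with $b=0$ at $r'$; unobserved with $b=0$ and perhaps observed later; unobserved with $b=1$ and perhaps observed later. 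Conditioning on ``player has observed by round $r$'' therefore does not determine $b$, and your round-$r$ observed message sets are $\supp(\cD_0^{h_r})$ (from the already-observed branch plus $(\cD_0^{h_r})'$) versus $\supp((\cD_1^{h_r})')$, which intersect at any $m$ with $\cD_1^{h_r}(m)>\cD_0^{h_r}(m)>0$. Condition~(3) fails.

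Worse, there is no way to patch this while keeping $\alpha=0$: fidelity forces an already-observed player (when $p_0>0$, $p_1=0$) to contribute mass that averages with the unobserved branch to give $\cD_0^h$, and with the symmetric $\alpha=0$ split the only choice is $\cD_0^h$ itself, which breaks disjointness. The paper's proof resolves this precisely by invoking Lemma~\ref{lem:distributions} with the \emph{asymmetric} parameter $\alpha=p_0$ (or $p_1$). Then the already-observed player, the just-now-observed player, and the unobserved player all sample from $\cD_0'$, $\cD_0'$, and $\cD$ respectively on $b=0$, and from $\cD_1'$, $\cD_1'$, $\cD$ on $b=1$. The decomposition $\cD_0 = (1-p_0)(1-\delta)\cD + (1-(1-p_0)(1-\delta))\cD_0'$ is exactly what makes the weighted average come out right, the observed outputs are always from the disjoint pair $\cD_0'$/$\cD_1'$ (so condition~(3) genuinely holds), and --- because the two supports are disjoint --- the invariant $\min(p_0,p_1)=0$ propagates to the next round, which is what licenses using $\alpha=p_0$ again. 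Your first round is identical to the paper's (there $p_0=p_1=0$ and the symmetric split is correct), but the induction needs the asymmetry you discarded.
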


\begin{proof}
	Let $b$ denote player $i$'s bit.  We use ``round $r$'' to refer to
	the $r$th time that player $i$ is asked to speak.  Let $m_r$ be the
	transcript of the protocol just before player $i$ speaks in round
	$r$, and let $m_r^+$ denote the transcript immediately after player
	$i$ speaks in round $r$.  Let $\cD_{m_r}^b$ be the
	distribution of player $i$'s message the $r$th time he is asked to
	speak, conditioned on the transcript so far being $m_r$ and on player $i$
	having the bit $b$.  We will describe an implementation of player
	$i$ that produces outputs with the correct distribution
	$\cD_{m_r}^b$ such that the implementation only looks at $b$ with
	relatively small probability.
	
	In the first round, given $m_1$, player $i$ looks at $b$ with
	probability $\dtv(\cD_{m_1}^0, \cD_{m_1}^1)$.  If he does not look at
	the bit, he outputs each message $m$ with probability proportional
	to $\min(\cD_{m_1}^0(m), \cD_{m_1}^1(m))$; if he sees the bit $b$,
	he outputs each message $m$ with probability proportional to
	$\max(0, \cD_{m_1}^b(m) - \cD_{m_1}^{1-b}(m))$.  His output is then
	distributed according to $\cD_{m_r}^b$.  Note also that, for any
	message $m$, it is not possible that the player can send $m$ both after
	reading a $0$ and after reading a $1$.
	
	In subsequent rounds $r$, given $m_r$, player $i$ needs to output a
	message with distribution $\cD_{m_r}^b$.  Let $p_0$ denote the
	probability that the player has already observed his bit in a
	previous round, conditioned on $m_r$ and $b=0$; let $p_1$ be
	analogous for $b=1$.  We will show by induction that
	$\min(p_0, p_1) = 0$ 
	for all $m_r$.  That is, any given transcript may be compatible with
	having already observed a $0$ or a $1$ but not both.  As noted above,
	this is true for $r=2$.
	
	Without loss of generality, suppose $p_1 = 0$.  We apply
	Lemma~\ref{lem:distributions} to $\cD_{m_r}^0$ and $\cD_{m_r}^1$
	with $\alpha = p_0$, obtaining 
	three distributions $(\cD, \cD_0, \cD_1)$ 
	such 
	that
	$\cD_{m_r}^0 = (1-p_0)(1-\delta)\cD +  (1 - (1-p_0)(1-\delta))\cD_0$ and 
	$\cD_{m_r}^1 = (1-\delta)\cD + \delta \cD_1,$
	and $\cD_0$ is disjoint from $\cD_1$.
	
	Player $i$ behaves as follows: if he has not observed his bit
	already, he does so with probability $\delta$.  After this, if he
	still has not observed his bit, he outputs a message according to
	$\cD$; if he has observed his bit $b$, he outputs according to
	$\cD_b$.
	
	The resulting distribution is $\cD_{m_r}^b$ regardless of $b$, and
	the set of possible transcripts where a $1$ has been observed is
	disjoint from those possible where a $0$ has been observed.  By
	induction, this holds for all rounds $r$. Thus, this is a simulation of 
	the original protocol that is "clean" with respect to player $i$.
\end{proof}

\begin{lemma}\label{lem:alt-protocol}
	Consider any (possibly multiround) communication protocol $\pi$
	where each player receives one bit. Each player $i$ can be
	implemented such that, if every other player receives a $0$ input,
	player $i$ only observes his input with probability
	$\dtv(\pi_{e_i}, \pi_{0})$.
\end{lemma}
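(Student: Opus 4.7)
The plan is to deduce this from Lemma~\ref{lem:clean-protocol}. First I would invoke that lemma to replace $\pi$ with an equivalent implementation that is clean with respect to player $i$; since we are only measuring the distribution of transcripts, this does not change $\pi_{e_i}$ or $\pi_0$, and in particular does not change $\dtv(\pi_{e_i},\pi_0)$.

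Next I would specialize to the situation where every player $j\neq i$ receives the bit $0$. Let $\alpha_b$ denote the probability, under this setting with player $i$ receiving bit $b$, that player $i$ ever ``observes'' his input bit during the protocol. The key claim is $\alpha_0=\alpha_1$. To see this, note that by property~(2) of a clean protocol, so long as player $i$ has not yet observed his bit, the distribution of his next message depends only on the transcript so far, and the other players' messages (on their fixed all-zero inputs) also depend only on the transcript. Hence the joint distribution of the transcript prefixes up to the first round in which player $i$ observes (or until termination, if he never does) is identical under $b=0$ and $b=1$. Since by property~(1) the per-round observation probability is a deterministic function of the transcript prefix, the total observation probability is the same for both inputs; call this common value $\alpha$.

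Now let $\cD_{\text{none}}$ be the distribution of transcripts conditioned on player $i$ never observing his bit, and let $\cD_b$ be the distribution conditioned on observing bit $b$. By the preceding argument $\cD_{\text{none}}$ does not depend on $b$, so we can write
\[
  \pi_0 = (1-\alpha)\cD_{\text{none}} + \alpha\cD_0,
  \qquad
  \pi_{e_i} = (1-\alpha)\cD_{\text{none}} + \alpha\cD_1.
\]
By property~(3) of a clean protocol, $\cD_0$ and $\cD_1$ have disjoint supports, so $\norm{1}{\cD_0-\cD_1}=2$, and therefore
\[
  \dtv(\pi_{e_i},\pi_0) \;=\; \tfrac{1}{2}\norm{1}{\alpha\cD_1-\alpha\cD_0} \;=\; \alpha.
\]
This exactly matches the observation probability of player $i$ under the clean implementation, so the lemma follows.

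The main obstacle I would expect is pinning down why $\alpha_0=\alpha_1$: one has to be careful that, even though the other players' inputs are fixed to zero, their messages could in principle respond to player $i$'s messages and thereby leak information about $b$. The cleanliness properties rule this out, but only so long as we truly condition on player $i$ never having observed his bit through the current round, which is why the induction-on-rounds structure of Lemma~\ref{lem:clean-protocol} is essential for this step.
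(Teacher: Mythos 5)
Your proof is correct and follows essentially the same route as the paper's: run the clean implementation from Lemma~\ref{lem:clean-protocol}, note that the chance of observing is independent of $b$, note that the not-yet-observed portion of the transcript distribution cancels between $\pi_0$ and $\pi_{e_i}$, and use the disjointness of the observed-$0$ and observed-$1$ transcripts to conclude $\dtv(\pi_0,\pi_{e_i})=\alpha$. The paper phrases this via per-transcript probabilities $p_0(m),p_1(m)$ and a sum over the set $S=\{m:p_0(m)>0\}$, whereas you package the same facts as a mixture decomposition $\pi_b=(1-\alpha)\cD_{\text{none}}+\alpha\cD_b$; this is only a presentational difference.
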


\begin{proof}
	Using Lemma \ref{lem:clean-protocol}, we know that player $i$ can be 
	implemented such that the protocol is clean with respect to that player. 
	
	We may now analyze the probability $p^*$ that player $i$ ever
        observes his bit, assuming that all other players receive the
        input zero.  For every possible transcript $m$ let $p_0(m)$
        denote the probability, conditioned on the transcript being
        $m$ and player $i$'s bit being $0$, that player $i$ observes
        his bit at any point during the protocol; let $p_1(m)$ be
        analogous for the bit being $1$.  Because the choice of player
        $i$ to observe his input bit in a clean protocol is
        independent of the bit, we have that
	$p^* = \sum_m \Pr_{\pi_0}[m] p_0(m) = \sum_m \Pr_{\pi_{e_i}}[m] p_1(m).$
	Moreover, because the protocol is independent of the bit if it is
	not observed,
	\[
	(1-p_0(m))\Pr_{\pi_0}[m] = (1-p_1(m))\Pr_{\pi_{e_i}}[m]
	\]
	for all $m$.  By the definition of a clean protocol, the last
        message player $i$ sends can be consistent with him observing
        a $0$ or a $1$ but not both; therefore $p_0(m) = 0$ or $p_1(m) = 0$ for
        all $m$.  Now, define
        $S := \{m \mid p_0(m) > 0\} = \{m \mid \Pr_{\pi_0}(m) >
        \Pr_{\pi_{e_i}}(m)\}$.  Therefore
	\begin{align*}
	\dtv(\pi_0, \pi_{e_i}) = \sum_{m \in S} \Pr_{\pi_0}[m] - 
	\Pr_{\pi_{e_i}}[m]
	= \sum_{m \in S} p_0(m)\Pr_{\pi_0}[m]
	= p^*
	\end{align*}
	as desired.
\end{proof}

Lemma~\ref{lem:alt-protocol} will be used to show that each player has
a decent chance of not reading their input.  But to get a lower bound
for $\disj$, we need a large \emph{set} of players that have a
nontrivial chance of all ignoring their input at the same time.  We
show the existence of such a set, despite the players not being
independent.  For any $c \in (0, 1)$, define
\begin{align}\label{eq:gamma}
  \gamma_c := \frac{1}{c\log(e/c)}
\end{align}
We have

\begin{lemma}\label{lem:find-set}
Let $c \in (0,1)$, $p \in (0,\frac{1-c}{2})$, and $\gamma_c$ as in~\eqref{eq:gamma}. For a set of 0-1 random 
variables $Y_1, \dotsc, Y_k$ such that $\E[\sum_i Y_i] = pk$,
there exists $S\subset \{1,2, \dotsc, n\}$ of size $ck$ such that 
$\Pr[\forall j\in S, Y_j = 0 ] > e^{- k/\gamma_c - 1}.$
\end{lemma}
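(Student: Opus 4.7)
The plan is a first-moment / averaging argument over uniformly random size-$ck$ subsets $S\subseteq\{1,\dotsc,k\}$, using no independence assumption on the $Y_j$'s whatsoever. First, I will apply Markov's inequality to the nonnegative random variable $\sum_j Y_j$, which has expectation $pk$. The hypothesis $p<\tfrac{1-c}{2}$ immediately gives
\[
\Pr\!\left[\,\textstyle\sum_j Y_j > (1-c)k\right] \le \frac{p}{1-c} < \tfrac12,
\]
so with probability strictly greater than $1/2$, the number of $j$ with $Y_j=0$ is at least $ck$, i.e.\ the zero-set contains at least one subset of cardinality $ck$.

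Second, for any size-$ck$ subset $S$ let $A_S$ denote the event $\{Y_j = 0 \text{ for all } j\in S\}$. By double counting,
\[
\sum_{|S|=ck}\Pr[A_S] \;=\; \E\!\left[\binom{k-\sum_j Y_j}{\,ck\,}\right]
\;\ge\; \Pr\!\left[\,\textstyle\sum_j Y_j \le (1-c)k\right] \;>\; \tfrac12,
\]
since on the event $\{\sum_j Y_j\le(1-c)k\}$ the binomial coefficient is at least $\binom{ck}{ck}=1$. Averaging over the $\binom{k}{ck}$ choices of $S$, there must exist some specific $S$ with $\Pr[A_S] > \tfrac{1}{2\binom{k}{ck}}$.

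Finally, I will invoke the standard Stirling-type bound $\binom{k}{ck}\le(e/c)^{ck}=e^{k/\gamma_c}$ (using the definition $\gamma_c=1/(c\log(e/c))$) to convert this into $\Pr[A_S] > \tfrac12 e^{-k/\gamma_c}$, and then use the numerical fact $1/2 > 1/e$ to conclude $\Pr[A_S] > e^{-k/\gamma_c - 1}$, as required.

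There isn't really a single hard step; the only point requiring mild care is ensuring the Markov bound is strict, which is why the hypothesis demands $p<(1-c)/2$ strictly. The rest is bookkeeping: a double count that converts the pointwise existence of a good $S$ per realization into the existence of a single $S$ with nontrivial probability over the joint distribution, followed by the standard estimate for $\binom{k}{ck}$.
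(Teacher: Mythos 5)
Your argument is correct and is essentially the same as the paper's: both use Markov's inequality to show $\Pr[\sum_j Y_j \leq (1-c)k] > 1/2$, average $\Pr[\forall j\in S, Y_j=0]$ over random size-$ck$ sets $S$ (the paper writes this as an expectation, you as a sum over $S$), apply the bound $\binom{k}{ck} \leq (e/c)^{ck}$, and conclude by the probabilistic method. The arrangement of the steps differs slightly but the ideas are identical.
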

\begin{proof}
We wish to show that there exists a set $S$ such that $Y_i = 0$ for all 
$i\in 
S$ with nontrivial probability. Observe that if $S$ were chosen uniformly at 
random,
\begin{align*}
\E_{S:\abs{S} = ck} \Pr[\forall j\in S, Y_j = 0] &\geq \frac{1}{\binom{k}{ck}} \Pr[wt(Y) \leq k-ck]
 \geq \left(\frac{c}{e}\right)^{ck} \cdot (1 - \frac{p}{1-c})
  \geq e^{-1-kc\log(e/c)}.
\end{align*}
where the first inequality considers the existence of such a set,
the second inequality uses
$\binom{a}{b} \leq (\frac{e\cdot a}{b})^b$ and Markov's
inequality, and $wt(Y)$ denotes the Hamming weight of $Y$, i.e., number of non-zero entries
of the vector $Y$. Therefore there 
exists a set $S$ of size $\Omega(ck)$ such that $\Pr[Y_S = 0] \geq 
e^{-1-kc\log(e/c)}$.
\end{proof}

We can now bound the 1-bit communication cost of our problem.
\begin{lemma}\label{lem:rand-1-index}
  Given $0 < \delta, c < 1$, $\gamma_c$ as in~\eqref{eq:gamma}, and
  $k\le \gamma_c\log(\frac{1}{2e\delta})$, for any $\delta$-error
  protocol for $F_{ck, k}$ we have that
	$cCost_{\mu_0, \delta}(\pi) = \Omega((1-c)^2).$
\end{lemma}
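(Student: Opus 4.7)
The plan is to lower-bound $cCost_{\mu_0, \delta}(\pi)$ by first reducing it to the average total-variation distance $\frac{1}{k}\sum_d \dtv(\pi_0, \pi_{e_d})$, and then showing this average is $\Omega(1-c)$ by a contradiction argument that exploits the clean-protocol machinery of Lemmas~\ref{lem:clean-protocol} and~\ref{lem:alt-protocol} together with the set-selection Lemma~\ref{lem:find-set}.

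First I would unfold the definition. Under $\mu_0$, only $X_D$ is nonzero and uniform in $\{0,1\}$, so $cCost_{\mu_0}(\pi) = I(X_D;\pi \mid D) = \frac{1}{k}\sum_{d=1}^{k} I(X_d;\pi \mid D=d)$. Since $X_d$ is a uniform bit and $\pi$ has conditional distributions $\pi_0$ and $\pi_{e_d}$, the mutual information $I(X_d;\pi \mid D=d)$ equals the (standard) Jensen–Shannon divergence between $\pi_0$ and $\pi_{e_d}$. Two applications of Pinsker's inequality then give $I(X_d;\pi \mid D=d) = \Omega(\dtv^2(\pi_0,\pi_{e_d}))$. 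By Jensen's inequality applied to $x\mapsto x^2$, it therefore suffices to prove the linear bound
\[
  \frac{1}{k}\sum_{d=1}^{k} \dtv(\pi_0, \pi_{e_d}) \;\geq\; \frac{1-c}{2}.
\]

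To prove this bound, I would argue by contradiction. Suppose $\frac{1}{k}\sum_d \dtv(\pi_0, \pi_{e_d}) < (1-c)/2$. Apply Lemma~\ref{lem:clean-protocol} successively to each player, obtaining an equivalent protocol $\tilde{\pi}$ in which every player is clean (each cleaning step preserves the transcript distribution, so Lemma~\ref{lem:alt-protocol} applies to each player in the final protocol). Run $\tilde{\pi}$ on the all-zeros input and let $Y_i$ be the indicator that player $i$ ever observes their bit; by Lemma~\ref{lem:alt-protocol}, $\E[Y_i] = \dtv(\pi_0,\pi_{e_i})$, so $\E[\sum_i Y_i] < (1-c)k/2$. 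Apply Lemma~\ref{lem:find-set} to produce a set $S \subseteq [k]$ with $|S| = ck = l$ and
\[
  q \;\eqdef\; \Pr[\forall i\in S,\, Y_i = 0] \;>\; e^{-k/\gamma_c - 1}.
\]

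Next I would exploit the coupling between the $0$-run and the $e_S$-run. In a clean protocol, a player who has not yet observed their bit sends messages independent of that bit, and observation probabilities are functions of the transcript only. Consequently the joint distribution of (transcript, observation pattern restricted to $S$) is identical under inputs $0$ and $e_S$ as long as no player in $S$ has observed, so the event ``no player in $S$ ever observes'' has the same probability $q$ under both inputs and yields the same conditional transcript distribution, and hence the same output. Under input $0$, correctness gives
\[
  \Pr_{\pi_0}[\text{output}=0 \wedge \text{no }i\in S\text{ observes}] \;\geq\; q - \delta,
\]
and the coupling transfers this to $\Pr_{\pi_{e_S}}[\text{output}=0] \geq q - \delta$. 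But $e_S$ is a YES instance of $F_{l,k}$, so correctness forces $\Pr_{\pi_{e_S}}[\text{output}=0] \leq \delta$. Hence $q \leq 2\delta$, so $e^{-k/\gamma_c - 1} < 2\delta$, i.e., $k > \gamma_c \log(1/(2e\delta))$, contradicting the hypothesis $k \leq \gamma_c \log(1/(2e\delta))$.

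The main obstacle is the coupling step: I need to verify carefully that applying Lemma~\ref{lem:clean-protocol} to every player simultaneously still yields, for each individual player, the observation probability $\dtv(\pi_0,\pi_{e_i})$ promised by Lemma~\ref{lem:alt-protocol}, and that the joint ``no-observation-in-$S$'' events genuinely coincide across the $0$ and $e_S$ runs despite the correlations induced by the shared blackboard. Lemma~\ref{lem:find-set} is the crucial tool here since the $Y_i$'s are not independent; the rest is bookkeeping that the cleanness property implies the sample paths on which no $S$-player observes are indistinguishable between the two runs.
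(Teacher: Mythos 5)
Your proposal is correct and follows essentially the same route as the paper: the same $cCost \to$ Jensen--Shannon $\to$ Pinsker $\to$ linear TV bound reduction, the same contradiction argument via Lemmas~\ref{lem:clean-protocol}, \ref{lem:alt-protocol}, and \ref{lem:find-set}, and the same conclusion that $\dtv(\pi_{e_S},\pi_0)$ being bounded away from $1$ contradicts $\delta$-error correctness. The coupling step you flag as an obstacle is exactly what the paper's one-line decomposition $\dtv(\pi_{e_S},\pi_0)\le\Pr[\overline{E_S}]$ silently relies on, and your more careful formulation of it is a legitimate (and arguably clearer) way to justify that inequality.
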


\begin{proof}
	Let $\pi$ be a protocol for $F_{ck, k}$. Let $\pi_{x}$ is the 
	distribution of the transcript of the protocol on input $x$. We start by 
	establishing a connection between conditional information cost and total 
	variation distances. First observe that due to the choice of 
	distribution $\mu_0$, we may write the conditional mutual information as:
	\begin{align*}
	cCost_{\mu_0, \delta}(\pi) = I(\pi(X_1, \dotsc, X_k); X_1, \dotsc, 
	X_k\mid D)
	 = \E_{i\in [k]} [ I(X_i; \pi_{0,0,0,\dotsc X_i,\dotsc 0,0,0})].
	\end{align*}
	Since $X_i$ is uniformly picked from $\{0,1\}$, this mutual
        information is a Jensen-Shannon divergence
        (see, for example, Wikipedia~\cite{wiki:JSD} or Proposition A.6 of \cite{BJKS}):
	\begin{align*}
	I(X_i; \pi_{0,0,0,\dotsc X_i,\dotsc 0,0,0})
	 = D_{JS}(\pi_{0}, \pi_{e_i})
	= \frac{1}{2}\left(D_{KL}(\pi_{0}, \frac{1}{2}(\pi_0+ \pi_{e_i})) + 
	D_{KL}(\pi_{e_i}, \frac{1}{2}(\pi_0 + \pi_{e_i}))\right)
	\end{align*}
	From Pinsker's inequality, $D_{KL}(P, Q) \geq 
	\frac{1}{2}\dtv^2(P, Q)$, so:
	\begin{align}
	cCost_{\mu_0, \delta}(\pi) \geq \frac{1}{4} \E_{i \in 
          [k]}[\dtv^2(\pi_0, \frac{1}{2}(\pi_{0}+\pi_{e_i})) + \dtv^2(\frac{1}{2}(\pi_{0}+\pi_{e_i}), \pi_{e_i})] = \frac{1}{8}\E_{i \in [k]}[\dtv^2(\pi_0, \pi_{e_i})].
          \label{eq:CCostTv}
	\end{align}
	
	This is similar to the connection established in Lemma 6.2 of 
	\cite{BJKS} between conditional information cost and squared Hellinger 
	distance (it is weaker but simpler to show). 
	
	Suppose, for the sake of contradiction, that $\sum_i \dtv(\pi_{e_i}, 
	\pi_{0}) = kp$ where $p < \frac{1-c}{2}$. Suppose for each player $i\in [k]$, 
	that $\dtv(\pi_{e_i}, \pi_{0}) = p_i$. By Lemma 
	\ref{lem:alt-protocol}, this implies that each player in the protocol 
	can be equivalently implemented in a manner such that---if everyone else 
	receives a $0$---player $i$ only looks at their input with 
	probability $p_i$. If a player does not look at his bit, it means
        the player's messages are
        independent of his input. 
        Let $Y_i$ denote the indicator random variable for 
	the event that player $i$ looks at his input in this equivalent 
	protocol. 
	
	For the input $X = 0$, we have $\E[\sum_i Y_i] = \sum p_i = kp$. 
	Observe, that for any set $S$, if $Y_i = 0$ for all $i\in S$, the 
	players do not see their input. So if $E_S$ denotes the event 
	that $\forall i\in S, Y_i = 0$, then
	\begin{align*}
	\dtv( \pi_{e_S}, \pi_{0}) & = \Pr[E_S]\cdot \dtv( \pi_{e_S} \mid 
	E_S, \pi_{0} \mid E_S) + \Pr[\overline{E_S}] \dtv( \pi_{e_S} \mid 
	\overline{E_S}, \pi_{0} \mid \overline{E_S}) \leq \Pr[\overline{E_S}]
	\end{align*}
	Since $\E[\sum_i Y_i] = kp$ for $p<\frac{1-c}{2}$, this means by 
	Lemma~\ref{lem:find-set} that there exists a set $S$ with 
	$\abs{S} = ck$ such that $\Pr[E_S] \geq e^{-k/\gamma_c  - 1}$. Since $k \leq 
	\gamma_c\log(\frac{1}{2e\delta})$, we have $\Pr[E_S] 
	> 2\delta$.
	For this $S$, we have that $\dtv( \pi_{e_S}, \pi_{0}) < 1 - 2\delta$ 
	and this means that the protocol errs with probability $>\delta$. This 
	is a contradiction. So, we must have $\sum_i \dtv(\pi_{e_i}, \pi_{0})
	> \frac{1-c}{2}k$. 
        By~\eqref{eq:CCostTv} and Jensen's inequality, this gives
        \[
          cCost_{\mu_0, \delta}(\pi) \geq \frac{1}{8}\E_{i\in [k]}[\dtv^2(\pi_{e_i}, \pi_{0})] \geq \frac{1}{8}\E_{i\in [k]}[\dtv(\pi_{e_i}, \pi_{0})]^2  \geq \frac{(1-c)^2}{32}.
        \]
\end{proof}

\subsection{Finishing it Off}
We prove a lower bound on the randomized communication complexity of $\disj$.
\begin{theorem}\label{thm:rand-lb}
	Given $0 < \delta, c < 1$ and $k 
	\le \gamma_c\log(\frac{1}{2e\delta})$ for $\gamma_c$ as in~\eqref{eq:gamma},
	\[ 
	R_{\delta}(\disj_{n, ck, k}) = \Omega((1-c)^2n).
	\]
\end{theorem}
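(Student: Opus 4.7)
The plan is to chain together the three main ingredients already established in the excerpt: the randomized-to-information-complexity reduction, the direct sum theorem for $\disj_{n,l,t}$, and the single-coordinate information cost bound.

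First, I would invoke Proposition \ref{cor:rand-cic} with the distribution $\eta_0$ defined in Section \ref{sec:comm-lb} to obtain $R_\delta(\disj_{n,ck,k}) \geq CIC_{\eta_0,\delta}(\disj_{n,ck,k})$. Next, I would cite the direct sum inequality \eqref{eq:disj-dirsum}, which was justified earlier via the four hypotheses of Theorem \ref{thm:dirsum-formal} (OR-decomposability of $\disj_{n,ck,k}$ via $F_{ck,k}$, product conditional distribution of $\mu_0$ given $D=d$, $\eta_0 = \mu_0^n$, and the collapsing property of $\eta_0$ since it is supported on NO instances). This gives
\[
  CIC_{\eta_0,\delta}(\disj_{n,ck,k}) \;\geq\; n \cdot CIC_{\mu_0,\delta}(F_{ck,k}).
\]

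Then, under the hypothesis $k \leq \gamma_c \log(\tfrac{1}{2e\delta})$, I would apply Lemma \ref{lem:rand-1-index} to conclude $CIC_{\mu_0,\delta}(F_{ck,k}) = \Omega((1-c)^2)$. Combining these three inequalities gives $R_\delta(\disj_{n,ck,k}) = \Omega((1-c)^2 n)$, which is the claim.

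There is essentially no new obstacle here, since all the heavy lifting has already been done: the delicate step was Lemma \ref{lem:rand-1-index} (the clean-protocol simulation plus the existence of a large hidden set $S$ of players via Lemma \ref{lem:find-set}), and the direct sum applicability was already verified when $\eta_0$ was defined. The only thing to double check when writing up the proof is that the hypotheses of Theorem \ref{thm:dirsum-formal} and Lemma \ref{lem:rand-1-index} match exactly, in particular that the range of $k$ and $c$ assumed in Lemma \ref{lem:rand-1-index} is the same as in the theorem statement, so the two bounds compose cleanly.
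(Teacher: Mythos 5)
Your proposal is correct and matches the paper's own proof essentially verbatim: the paper likewise chains Proposition~\ref{cor:rand-cic}, the direct sum inequality~\eqref{eq:disj-dirsum}, and Lemma~\ref{lem:rand-1-index} to obtain $R_{\delta}(\disj_{n, ck, k}) \geq CIC_{\eta_0, \delta}(\disj_{n, ck, k}) \geq n \cdot CIC_{\mu_0, \delta}(F_{ck, k}) \gtrsim (1-c)^2 n$. No further commentary is needed.
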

To prove this, it suffices to prove Lemma 
\ref{lem:rand-1-index} where we show a lower bound on the conditional 
information cost of $\delta$-error protocols for $F_{ck, k}$. This implies a 
lower bound on the conditional information complexity of $F_{ck, k}$ which 
together with \eqref{eq:disj-dirsum} implies the desired result. 
\begin{proof}
  Combining Proposition \ref{cor:rand-cic}, Equation~\eqref{eq:disj-dirsum},
  and Lemma~\ref{lem:rand-1-index} gives:
	\begin{align*}
	R_{\delta}(\disj_{n, ck, k})
	 \geq CIC_{\eta_0, \delta}(\disj_{n, ck, k})
	\geq n \cdot CIC_{\mu_0, \delta} (F_{ck, k})
	\gtrsim n(1-c)^2
        \end{align*}
        as desired.
\end{proof}
In the Lemma \ref{lem:rand-1-index} we showed that for any protocol for $F_{ck, 
k}$ with input drawn from $\mu_0$, if the conditional information cost is 
$o(1)$, there exists an input on which it errs with probability $> \delta$. 
This implies a lower bound on the conditional information complexity of $F_{ck, 
k}$. 

For algorithms that have large error probability, the success 
probability can be amplified by using independent copies of the algorithm and 
taking the majority vote. We use this observation to obtain a lower bound for 
algorithms with error probability larger than $e^{-k}$. 

\restate{thm:main}

\begin{proof}
  For the absolute constant $\gamma_c$, when
  $k < \gamma_c\log(1/\delta)$ (or $\delta < e^{-k/\gamma_c}$), Theorem
  \ref{thm:rand-lb} gives us a lower bound of $\Omega(n)$. Now,
  consider the case where $\delta > e^{-k/\gamma_c}$. Suppose $\pi$ is a
  protocol whose communication cost is $C$. Then, we may amplify
  the success probability of this protocol. We create a new protocol
  $\pi'$ which runs $r$ independent copies of $\pi$ in parallel and
  outputs the majority vote across these copies.  The probability of
  failure for this new protocol is:
  $\Pr[\geq r/2 \text{ copies of } \pi \text { fail}] \leq
  \binom{r}{r/2} \delta^{r/2} \leq (4\delta)^{r/2}$.  This achieves
  failure probability $e^{-k/\gamma_c}$ for
  $r = O_c(\frac{k}{\log(1/\delta)})$. The lower bound of $\Omega_c(n)$ on
  the communication complexity of $e^{-k/\gamma_c}$-error protocols
  implies that the communication cost of $\pi$ is lower bounded by
  $\Omega(n \frac{\log(1/\delta)}{k})$ in this case.
\end{proof}

\section{Lower Bounds for $\ell_2$-Heavy Hitters}

In this section, we will prove lower bounds for certain variants of the  
$\ell_2$ heavy hitters problem in the insertion-only model. Our first lower 
bound follows from some simple observations and the lower bounds that 
follow use reductions from the Mostly Set Disjointness problem and the 
lower bound proved in the previous section.

\begin{definition}
	Given $p >1$, in the $\eps$-$\ell_p$-heavy hitters problem, we are 
	given $\eps\in (0,1)$ and a stream of items $a_1, \dotsc, a_m$ where 
	$a_i\in [n]$. If $f_i$ denotes the frequency of item $i$ in the stream, 
	the algorithm should	output all the elements $j\in [n]$ 
	such that: 
	\[
		\abs{f_j} \geq \eps \norm{p}{f}
	\]
\end{definition}

\begin{theorem}\label{thm:linear-sketching}
	Given $\eps\in (0,\frac{1}{4}]$, any deterministic linear sketching 
	algorithm for the $\eps$-$\ell_2$-heavy hitters problem must use at 
	least $\Omega(n)$ bits of space even for nonnegative vectors.	
\end{theorem}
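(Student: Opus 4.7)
The plan is to show that any deterministic linear sketch matrix $A\in\mathbb{R}^{m\times n}$ for the $\eps$-$\ell_2$-heavy hitters problem on nonnegative inputs must satisfy $m\geq (1-\eps^2)n$, which is $\Omega(n)$ for $\eps\leq 1/4$. Since each genuine linear measurement needs at least one bit of storage, this yields the claimed bit lower bound and matches the ``$\Omega(n)$ linear measurements'' phrasing in the paragraph preceding the theorem. The strategy is to exploit $\ker(A)$: if $m$ were too small, $\ker(A)$ would contain a vector $z$ with one coordinate exceeding $\eps\|z\|_2$, and splitting $z=z^+-z^-$ into nonnegative parts would produce two valid nonnegative inputs with identical sketches but different heavy hitter sets, contradicting determinism.

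The main technical step is a standard projection--trace identity. Let $P$ be the orthogonal projection onto $V:=\ker(A)$, so $\mathrm{tr}(P)=\dim V=n-m$. Since $P=P^T=P^2$, we have $P_{jj}=\|Pe_j\|_2^2$, and averaging over coordinates produces some $j\in[n]$ with $P_{jj}\geq (n-m)/n$. The unit vector $z:=Pe_j/\|Pe_j\|_2\in V$ then satisfies $z_j=\|Pe_j\|_2\geq\sqrt{(n-m)/n}$, and $z_j>0$ automatically. Under the contradictory assumption $m<(1-\eps^2)n$, this yields $z_j>\eps=\eps\|z\|_2$, strictly.

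Now decompose $z=z^+-z^-$ with $z^+,z^-\geq 0$ having disjoint supports. Both are valid nonnegative inputs, and $z\in\ker(A)$ gives $Az^+=Az^-$. Because $z_j>0$ we have $z^+_j=z_j$ and $z^-_j=0$; together with $\|z^+\|_2\leq\|z\|_2=1$ this forces $z^+_j>\eps\|z^+\|_2$, so any correct algorithm must list $j$ among the heavy hitters of $z^+$. On the other hand, $z^-_j=0$ means $j$ cannot appear in a correct output for $z^-$, and the degenerate case $z^-=0$ is harmless since the correct outputs still differ. A deterministic recovery procedure reads the same sketch on both inputs and must return the same answer, so it errs on at least one of the two inputs.

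The main obstacle I anticipate is essentially cosmetic: writing the decomposition cleanly, handling the degenerate case $z^-=0$, and the routine passage from $\Omega(n)$ measurements to $\Omega(n)$ bits. The heart of the argument is the single projection-trace averaging identity, which turns a gap in the sketch's row count into a heavy-hitter-producing vector in its kernel; notably, this proof does not invoke the Mostly Set Disjointness lower bound proved earlier in the paper.
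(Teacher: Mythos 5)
Your proposal is correct and takes essentially the same approach as the paper: both find a vector with a large coordinate in $\ker(A)$ via an averaging/trace argument on the orthogonal projection $I-M^TM$, then produce two nonnegative inputs differing by that kernel vector, one of which has a heavy hitter at the chosen index and the other not. The only difference is cosmetic---you split $z=z^+-z^-$ symmetrically, while the paper uses the pair $w$ and $w+v$ with $w_j=|v_j|$ for $j\neq i^*$ and $w_{i^*}=0$---but both constructions implement the same idea.
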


\begin{proof}
	Assume for the sake of contradiction that $r  = o(n)$ and $M\in 
	\R^{r\times n}$ is the sketching matrix which is associated with a 
	deterministic algorithm for $1/4$-$\ell_2$ heavy hitters. We may assume 
	that $M$ has orthonormal rows (else there is an orthonormal $r\times n$ 
	matrix whose sketch is linearly related to the sketch in the algorithm 
	and we consider that matrix).
	
	Since $M$ is orthonormal we have $\sum_{i\in [n]} \norm{2}{M^TM e_i}^2 
	\leq r$. So, there must exists an $i^*\in [n]$ such that $\norm{2}{M^TM 
	e_{i^*}}^2 \leq r/n$. Consider the vector $v = e_{i^*} - M^T M e_{i^*}$ 
	which lies in the kernel of $M$. Observe that $v_{i^*}^2 \geq (1 - 
	r/n)^2\geq 1/2$ and $\norm{2}{v}^2 \leq 1$ since $I - M^T M$ is a 
	projection.	
	
	Now, let us define $w\in \R^n$ such that for all $j\neq i^*$, $w_j = 
	\abs{v_j}$ and $w_{i^*} = 0$. Observe that $w+v$ is a non-negative 
	vector and that $i^*$ is a heavy hitter in $(w+v)$ because 
	$(w+v)_{i^*}^2 \geq 1/2$ and $\norm{2}{w+v}^2 \leq (2 \norm{2}{v})^2 
	\leq 4$. Since $v$ is in the kernel of $M$ , $M(w+v) = Mw$ and the 
	algorithm must give the same output for both $(w+v)$ and $w$. However, 
	$i^*$ is a heavy hitter in $(w+v)$ and is not a heavy hitter in $w$. 
	Hence, by contradiction, $r = \Omega(n)$.
\end{proof}
In Theorem \ref{thm:delta-lb}, we prove a lower bound on the space complexity 
of $\delta$-error $r$-pass streaming algorithm for $\eps$-$\ell_p$-heavy 
hitters through a reduction from Mostly Set Disjointness.

\state{thm:delta-lb}

\begin{proof}
	Let $\mathcal{A}$ be a $\delta$-error $r$-pass streaming algorithm for 
	$\eps$-$\ell_p$-heavy hitters in the insertion-only model. We describe a 
	multiparty protocol to deterministically solve the Mostly Set 
	Disjointness problem i.e., $\disj_{n, \eps (4n)^{\frac{1}{p}}, 2\eps 
	(4n)^{\frac{1}{p}}}$ that uses the $\mathcal{A}$. The players simulate a 
	stream which updates a vector $x\in \R^{2n}$. Instead of starting with 
	$0^{2n}$ (as is the case with most streaming algorithms), the protocol 
	starts off with a vector 
	\[
	\setlength{\arraycolsep}{0pt}
	f_0 = 
	\left( 
	\begin{array}{c}
	\begin{array}{c}
	0 \\ \vdots \\ 0
	\end{array} \\
	\begin{array}{c}
	1 \\ \vdots \\ 1
	\end{array}
	\end{array}
	\right)
	~ 
	\begin{array}{c}
	\noleftdelimiter
	\vphantom{\begin{array}{c}
		0 \\ \vdots \\ 0
		\end{array}}
	\right\} n \\
	\noleftdelimiter
	\vphantom{\begin{array}{c}
		x_{n_1+1} \\ \vdots \\ x_n
		\end{array}}
	\right\} n
	\end{array}
	\]
	Each player performs an update $f\leftarrow f + \delta_i$ to the vector 
	and passes the state of $\mathcal{A}$ to the next player. The update 
	vector $\delta_i$ that is processed by player $i$ is just their input 
	$x_i$ padded to length $2n$.  
	\[
	\setlength{\arraycolsep}{0pt}
	\delta = 
	\left( 
	\begin{array}{c}
	\begin{array}{c}
	\\ x_i \\ \\ 
	\end{array} \\
	\begin{array}{c}
	0 \\ \vdots \\ 0
	\end{array}
	\end{array}
	\right)
	\]
	Observe that if the input to the players is a NO-instance of  
	$\disj_{n,  \eps  (4n)^{1/p}, 2\eps(4n)^{1/p}}$, 
	then the final vector $f'$ in the turnstile stream consists of 0-1 
	entries with at least $n$ 1-s. So, $\norm{p}{f'}^p \geq n$ and since 
	$\eps\geq n^{1/p}$, no element is a $\eps$-$\ell_p$ 
	heavy hitter. 
	
	If the input is a YES-instance, then the final vector $f'$ 
	consists of $\leq 2n - 1$ entries that are 1 and one entry at which is 
	$\eps (4n)^{\frac{1}{p}}$. Since $4\eps^p n \geq \eps^p ( 2n + 
	4\eps^p n)$, that entry is a $\eps$-heavy hitter. Using the lower bound 
	of Theorem \ref{thm:main}, we know that the total communication in 
	the protocol is $\Omega(\min(n, 
	n\frac{\log(1/\delta)}{\eps n^{1/p}}))$. Since the number of messages 
	sent over $r$ rounds in the protocol is $r\cdot 2\eps(4n)^{1/p}$, there 
	exists at least one player whose communication is:
	\[
	\Omega \Big(\min\big(\frac{n^{1-\frac{1}{p}}}{r\eps}, 
	\frac{n^{1-\frac{2}{p}}\log(1/\delta)}{r\eps^2} \big) \Big)
	\]
	bits and this is a lower bound on the space complexity of $\mathcal{A}$. 
\end{proof}

A deterministic lower bound follows as a consequence of this lower bound. 
\state{thm:p-pass-lb}

In real world applications, one is concerned with lower bounds for naturally 
occurring frequency vectors. One such naturally occurring frequency 
distribution is a power law frequency distribution where the $i^\text{th}$ most 
frequent element has frequency $\propto \frac{1}{i^\zeta}$ 
where $\zeta$ typically lies in $(0.5,1]$. Formally:

\begin{definition}
Let $f\in\R^n$ be a vector such that $\abs{f_{(1)}} \geq \abs{f_{(2)}} \geq 
\dotsc \abs{f_{(n)}}$. We say that this vector is power law distributed with 
parameter $\zeta$ if for all $i\in [n]$, 
\[
	\abs{f_{(i)}} = \Theta(f_{(1)} \cdot i^{-\zeta}) + O(1) 
\]
\end{definition}

In the next theorem, we prove a lower bound on the space complexity of 
streaming algorithms for $\ell_p$-heavy hitters when the frequency vector is 
power law distributed. We denote $H_m = \sum_{i=1}^{\infty} i^{-m}$ which is 
finite when $m>1$.  
\begin{theorem}\label{thm:power-law}
  Given $p\geq 1$, $\zeta \in (\frac{1}{p}, 1]$ and $\eps \in 
  (\frac{1}{n^{\zeta}}, \frac{1}{(2 + 2\cdot H_{p\zeta})^{1/p}})$, any 
  $\delta$-error $r$-pass streaming algorithm for the $\eps$-$\ell_p$-heavy 
  hitters problem where the frequency vector is power law distributed with
  parameter $\zeta$ must have space complexity of
  $\Omega(\min(n^{1-\zeta}, n^{1-2\zeta}\log(1/\delta)))$. 
\end{theorem}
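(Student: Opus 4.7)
My plan is to reduce $\disj_{m, l, k}$ (with suitable parameters) to the power-law $\ell_p$-heavy hitters problem and then apply Theorem~\ref{thm:main}, generalizing the approach of Theorem~\ref{thm:delta-lb}. The key idea, already sketched informally earlier in the paper, is to embed the $\disj$ instance as the extremal entries of an $n$-dimensional power-law distribution---the single largest and the $\Theta(n)$ smallest entries---and to fill in the ``middle'' of the distribution deterministically, so that the resulting frequency vector is power-law distributed with parameter $\zeta$ regardless of whether the $\disj$ input is a YES or NO instance.

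Concretely, I would set $k = \Theta(n^\zeta)$, $l = ck$ for the absolute constant $c \in (0,1)$ of Theorem~\ref{thm:main}, and $\disj$-universe $m = n/2$. Stream positions $n/2+1, \ldots, n$ are reserved for $\disj$ coordinates, so $\disj$ coordinate $j$ maps to stream position $n/2+j$. At the start of the stream, one player deterministically inserts $\lceil l/(i+1)^\zeta \rceil$ copies of stream coordinate $i$ for each $i \in [n/2]$; these populate the ``middle'' of the power-law shape. Then each of the $k$ players in turn inserts a $+1$ at stream position $n/2+j$ whenever $X_{i,j} = 1$, passing the streaming-algorithm state to the next player and repeating over $r$ passes.

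To finish, I would verify three things. First, the sorted frequency vector satisfies the power-law definition with parameter $\zeta$ in both cases: in YES, $f_{(1)} = l$ and $f_{(i)} = l/i^\zeta + O(1)$; in NO, $f_{(1)} = \lceil l/2^\zeta \rceil$ and $f_{(i)} = \Theta(f_{(1)} i^{-\zeta})+O(1)$, using that $(i+1)^\zeta$ and $(2i)^\zeta$ differ by at most the factor $2^\zeta$ and that the $\disj$ tail entries in $\{0,1\}$ are absorbed by the $O(1)$ slack in the definition. Second, using $p\zeta > 1$, one gets $\|f\|_p^p = \Theta(l^p H_{p\zeta})$ in both cases, with the $O(n)$ contribution from the tail dominated by $l^p = n^{p\zeta}$. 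Third, under the given $\eps$ range: in YES the stream position in $[n/2+1, n]$ of frequency $l$ satisfies $l^p \geq \eps^p \|f\|_p^p$ by $\eps^p \leq 1/(2+2H_{p\zeta})$, so it must appear in the algorithm's output; while in NO every position in $[n/2+1, n]$ has frequency $\leq 1$ and $1 \leq (\eps^p/2)\|f\|_p^p$ by $\eps > 1/n^\zeta$ (up to an absolute constant). The players then decide $\disj$ by checking whether the heavy hitters output intersects $[n/2+1, n]$.

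Finally, an $r$-pass algorithm with $S$ bits of memory yields a $\disj$ protocol of total communication $O(rkS)$. Theorem~\ref{thm:main} forces $rkS = \Omega(\min(m, m\log(1/\delta)/k))$, and plugging in $m = n/2$, $k = \Theta(n^\zeta)$, and dividing by $rk$, yields the claimed $S = \Omega(\min(n^{1-\zeta}, n^{1-2\zeta}\log(1/\delta)))$ (with $r$ treated as a constant, as in the statement). The main subtlety lies in the third verification step: the constants defining $l$ and the middle inserts must be chosen so that both the ``heavy hitter is heavy'' condition in YES and the ``no spurious heavy hitter'' condition in NO hold throughout the allowed $\eps$-range. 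The $2 + 2H_{p\zeta}$ in the upper bound on $\eps$ is precisely what absorbs the constant-factor loss in bounding $\|f\|_p^p$ by (roughly) $2 l^p H_{p\zeta}$.
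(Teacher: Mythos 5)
Your proposal follows essentially the same route as the paper's proof: embed the $\disj$ instance as the extremal coordinates of a power-law frequency vector, deterministically fill the middle with frequencies $\propto i^{-\zeta}$ so the vector is power-law distributed in both the YES and NO cases, decide $\disj$ from whether the reported heavy hitters intersect the $\disj$ block, and read off the bound from Theorem~\ref{thm:main} via the usual communication-to-space conversion. The differences are only in bookkeeping---the paper works in $\R^{2n}$ with padding values $2n^\zeta/i^\zeta$ and sets $(l,k)=(n^\zeta,2n^\zeta)$, whereas you use a length-$n$ vector with padding scaled by $l=ck$---not in the reduction itself.
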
 
\begin{proof}
	Let $\mathcal{A}$ be a one-pass deterministic streaming algorithm for 
	$\ell_2$ heavy hitters when the frequency vector is power law 
	distributed with parameter $\zeta$. We will use a reduction similar to 
	the Theorem \ref{thm:delta-lb} to deterministically solve $\disj_{n, 
	n^{\zeta}, 2n^{\zeta}}$ using $\mathcal{A}$.
	
	Instead of padding the initial vector $f_0$ with 1's as in Theorem 
	\ref{thm:p-pass-lb}, we pad with $\frac{2n^{\zeta}}{i^{\zeta}}$ for 
	$i\in [2, n]$. 
	
	\[
	\setlength{\arraycolsep}{0pt}
	f_0 = 
	\left( 
	\begin{array}{c}
	\begin{array}{c}
	0 \\ 0 \\ \vdots \\ 0
	\end{array} \\
	\begin{array}{c}
	2n^{\zeta}\cdot 2^{-\zeta} \\ 2n^{\zeta}\cdot 3^{-\zeta} \\ \vdots \\ 2
	\end{array}
	\end{array}
	\right)
	~ 
	\begin{array}{c}
	\noleftdelimiter
	\vphantom{\begin{array}{c}
		0 \\ 0 \\ \vdots \\ 0
		\end{array}}
	\right\} n \\
	\noleftdelimiter
	\vphantom{\begin{array}{c}
		2n^{\zeta}.2^{-\zeta} \\ 2n^{\zeta}.3^{-\zeta} \\ \vdots \\ 2
		\end{array}}
	\right\} n
	\end{array}
	\]
	
	Now, suppose the players pass this frequency vector and successively 
	perform updates to obtain the final frequency vector $f'$. In the YES 
	instance, there exists one index $i\in [n]$ such that $\abs{f'_i}^p = 
	n^{p\zeta}$ and in the NO instance for all $i\in [n]$, we have 
	$\abs{f'_i} \leq 1$. In the NO case, we have $\norm{p}{f'}^p \geq 
	\sum_{i\in [2, n+1] } 2n^{p\zeta}\cdot i^{-p\zeta} \geq n^{p\zeta}$ and 
	in the YES case 
	\begin{align*}
	\norm{p}{f'}^p &= \sum_{i\in 
	[2n] } (f'_i)^p\\
		& \leq n^{p\zeta} + n + \sum_{i\in [2, n+1] }  
	2n^{p\zeta}\cdot i^{-p\zeta}\\
		& \leq n^{p\zeta} + n + H_{p\zeta} 2n^{p\zeta}\\
		& < (2 + 2H_{p\zeta}) n^{p\zeta}.
	\end{align*}
	
	So, in the YES instance, the heavy element is a 
	$\eps$-$\ell_p$-heavy hitter since $\eps^p < \frac{1}{2 (1+H_{p\zeta})}$ 
	and in the NO instance all the $\ell_p$-heavy hitters are indices in 
	$[n+1, 2n]$. Now, the final player runs the $\ell_p$-heavy hitter 
	algorithm and if any element from $[1,n]$ is a heavy hitter they output 
	YES and they output NO otherwise. 
	
	So, we have described a reduction from $\ell_p$ heavy hitters for power 
	law distributed vectors to Mostly Set Disjointness. Using Theorem 
	\ref{thm:main}, the total communication here is lower bounded by 
	$\Omega( n, n^{1-\zeta}\log(1/\delta))$. Since there are $n^{\zeta}$ 
	players, the space complexity lower bound for the streaming algorithm is 
	$\Omega( n^{1-\zeta}, n^{1-2\zeta}\log(1/\delta))$.
\end{proof}

\section{Application to Low Rank Approximation}\label{sec:lowrank}

As an application of our deterministic $\ell_2$-heavy hitters lower bound in 
insertion streams, we prove a lower bound for the low rank approximation 
problem in the standard row-arrival model in insertion streams: we see rows 
$A_1, A_2, \ldots, A_n$ each in $\mathbb{R}^d$, one at a time. At the end of 
the stream we should output a projection $P$ onto a rank-$k$ space for which 
$\|A-AP\|_F^2 \leq (1+\epsilon)\|A-A_k\|_F^2$, where $A_k$ is the best rank-$k$ 
approximation to $A$. The {\sf FrequentDirections} algorithm provides a 
deterministic upper bound of $O(dk/\epsilon)$ words of memory (assuming entries 
of $A$ are $O(\log(nd))$ bits and a word is $O(\log(nd))$ bits) was shown in 
\cite{l13,gp14}, and a matching lower bound of $\Omega(dk/\epsilon)$ words of 
memory was shown in \cite{w14}. See also \cite{g16} where the upper and lower 
bounds were combined and additional results for deterministic algorithms were 
shown.

A natural question is if {\sf FrequentDirections} can be improved when the rows 
of your matrix are sparse. Indeed, the sparse setting was shown to have faster 
running times in \cite{glp16,h18}. Assuming there are $n$ rows and each row has 
$s$ non-zero entries, the running time was shown to be $O(sn(k +\log n) + nk^3 
+ d(k/\epsilon)^3)$, significantly improving the $nd$ time required for dense 
matrices. Another question is if one can improve the memory required in the 
sparse setting. The above lower bound has an $\Omega(d)$ term in its complexity 
because of the need to store directions in $\mathbb{R}^d$. However, it is 
well-known \cite{gs12} that any matrix $A$ contains $O(k/\epsilon)$ rows whose 
row-span contains a rank-$k$ projection $P$ for which $\|A-AP\|_F^2 \leq 
(1+\epsilon)\|A-A_k\|_F^2$. Consequently, it is conceivable in the stream one 
could use $O(s k / \epsilon)$ words of memory in the sparse setting, which 
would be a significant improvement if $s \ll d$. Indeed, in the related 
communication setting, this was shown to be possible in \cite{bwz16}, whereby 
assuming the rows have at most $s$ non-zero entries it is possible to find such 
a $P$ with communication only $O(sk/\epsilon)$ words per server, improving upon 
the $O(dk/\epsilon)$ words per server bound for general protocols, at least in 
the randomized case. It was left open if the analogous improvement was possible 
in the streaming setting, even for deterministic algorithms such as {\sf 
FrequentDirections}.

Here we use our deterministic lower bound to show it is not possible to remove 
a polynomial dependence on $d$ in the memory required in streaming setting for 
deterministic algorithms.

\state{thm:rank}

\begin{proof}
Recall in one instantiation of our hard communication problem, the players have 
sets $S_1, \ldots, S_{\sqrt{d}} \subseteq \{1, 2, \ldots, d\}$ each of size 
$\sqrt{d}/2$ and either the sets are pairwise disjoint or there exists a unique 
element $i^*$ occurring in at least $2/3$ fraction of the sets. We associate 
each element $i$ in each set $S_{\ell}$ with a row of $A$ which the standard 
unit vector $e_i$ which is $1$ in position $i$ and $0$ in all remaining 
positions. The stream is defined by seeing all the rows corresponding to 
elements in $S_1$, then in $S_2$, and so on.

Suppose we have seen the first $1/2$ fraction of sets in the stream. In this 
case, the row $i^*$ must have occurred in at least $1/2-1/3 = 1/6$ fraction of 
sets. Thus, at this point in the stream, the top singular value of $A$ is 
$\sqrt{d}/6$ and all remaining singular values of $A$ equal $1$. Now, the 
algorithm outputs a rank-$1$ projection $P$ from its internal memory state. 
Suppose $P = vv^T$ for a unit vector $v$. Then
$$\|A-Avv^T\|_F^2 = \|A\|_F^2 - \|Av\|_2^2 \geq \|A\|_F^2 - 1 + (d/36) 
v_{i^*}^2.$$
Consequently, to obtain a $C$-approximation for a sufficiently small constant 
$C > 1$, we must have $v_{i^*}^2 = \Omega(1)$. Since $\|v\|_2^2 = 1$, there is 
a set $T$ of size $O(1)$ which contains all indices $j$ for which $v_j^2 = 
\Omega(1)$.

Now, since we have only observed a $1/2$ fraction of sets in the stream, the 
element $i^*$ must occur in at least $2/3-1/2 = 1/6$ fraction of sets in the 
remaining half of the stream. Thus, for each element in the set $T$, we can 
check if it occurs at all in the second half of the stream. However, if there 
is such an element $i^*$, it must be the only element in $T$ occurring in the 
second half of the stream. In case the sets in our hard instance are pairwise 
disjoint, no element in $T$ will occur in the second half of the stream. Thus, 
we can deterministically distinguish which of the two cases we are in.

Note that the maximum communication of this reduction is the memory size of the 
streaming algorithm, together with an additional additive $O(\log d)$ bits of 
memory to store $T$. Thus, we get that the memory required of our streaming 
algorithm is at least $\Omega(\sqrt{d}) - O(\log d) = \Omega(\sqrt{d})$ bits.
\end{proof}

\section{Algorithm for bounded-length turnstile streams}

In this section we show that $\ell_2$ heavy hitters on turnstile
streams of length $O(n)$ can be solved in $O(n^{2/3})$ space.  This is
intermediate between the $O(\sqrt{n})$ possible in the insertion-only
model and the $\Omega(n)$ necessary in linear sketching.

\state{thm:alg}
\begin{proof}
  Let $S$ be a parameter to be determined later.  We run three
  algorithms in parallel: space-$O(S)$ \textsc{FrequentElements} on the
  positive updates to $x$; space-$O(S)$ \textsc{FrequentElements} on the
  negative updates to $x$ (with sign flipped to be positive); and a
  linear sketching algorithm for exact $S$-sparse recovery (e.g.,
  Reed-Solomon syndrome decoding).

  Let $P,N$ be the number of positive/negative updates, respectively,
  so $L = P+N$.  Let $x^+$ and $x^-$ be the sum of positive/negative
  updates, so $x = x^+ - x^-$.  The two \textsc{FrequentElements}
  sketches give us estimates $\wh{x}^+$ and $\wh{x}^-$, respectively,
  such that for each $i$:
  \begin{align*}
    x^+_i - P/S &\leq \wh{x}^+_i \leq x^+_i &
    x^-_i - N/S &\leq \wh{x}^-_i \leq x^-_i
  \end{align*}
  Therefore $\wh{x} := \wh{x}^+ - \wh{x}^-$ satisfies
  \[
    \|\wh{x} - x\|_\infty \leq \max(P/S, N/S) \leq L/S.
  \]
  Second, the $S$-sparse recovery algorithm gives us a $\wh{y}$ such
  that, if $\|x\|_0 \leq S$, $\wh{y}_i = x_i$ for all $i$.

  For a strict turnstile stream, we can compute $\|x\|_1 = P-N$.  Our
  algorithm outputs the $\eps$-heavy hitters of $\wh{y}$ if
  $\|x\|_1 \leq S$, and otherwise outputs the entries of
  $\wh{x}$ larger than $3L/S$.

  Since $\|x\|_0 \leq \|x\|_1$, the output is exactly correct when
  $\|x\|_1 \leq S$.  Otherwise, $\|x\|_2 \geq \sqrt{\|x\|_1} \geq \sqrt{S}$, so 
  for $S \geq (L/\eps)^{2/3}$,
  \[
    \|\wh{x} - x\|_\infty \leq L/S \leq \eps \sqrt{S} \leq \eps \|x\|_2.
  \]
  Therefore the algorithm will output all $4\eps$-heavy hitters and
  only $2\eps$-heavy hitters.  Rescaling $\eps$ by 4 gives the
  standard $\ell_2$ heavy hitters guarantee.
\end{proof}

\begin{remark}\label{remark:turn}
  For non-strict turnstile streams, one can still achieve the
  $\ell_\infty/\ell_2$ guarantee
  \[
    \|\wh{z} - x\|_\infty \leq \eps \|x\|_2
  \]
  with the same space, but the $\ell_2$ heavy hitters guarantee (of
  outputting all $\eps$-heavy hitters and only $\eps/2$-heavy hitters)
  requires $\Omega(\min(n, L))$ space.
\end{remark}
\begin{proof}
  To achieve the $\ell_\infty/\ell_2$ guarantee, we combine $\wh{x}$
  and $\wh{y}$ in the above algorithm slightly differently: if
  $\|\wh{y} - \wh{x}\|_\infty \leq L/S$, output $\wh{y}$; else, output
  $\wh{x}$.  Call this output $\wh{z}$.  We have that
  $\|\wh{z} - x\|_\infty \leq \|\wh{z} - \wh{x}\|_\infty + \|\wh{x} -
  x\|_\infty \leq 2L/S$ unconditionally, and $\wh{z} = x$ if
  $\|x\|_0 \leq S$.  The algorithm outputs $\wh{z}$.

  So when $\|x\|_0 \leq S$, this algorithm recovers $x$ exactly and
  certainly finds the heavy hitters.  On the other hand, when
  $\|x\|_0 \geq S$, we have $\|x\|_2 \geq \sqrt{S}$.  Therefore for
  $S \geq 2(L/\eps)^{2/3}$,
  \[
    \|\wh{z} - x\|_\infty \leq 2L/S \leq \eps \sqrt{S} \leq \eps \|x\|_2
  \]
  as desired.

  For the lower bound, it suffices to consider $L = \Theta(n)$
  [otherwise, restrict to the first $\Theta(L)$ coordinates/do nothing
  interesting after the first $O(n)$ updates].  We can solve
  \textsc{Equality} on $b = n/10$ bits as follows: using a
  constant-distance, constant-rate code, associate each input
  $y \in \{0, 1\}^b$ with a codeword $C_y \in \{0, 1\}^{n-1}$, such
  that $\|C_y - C_{y'}\|_1 > n/10$ for all $y \neq y'$.  Alice, given
  the input $y$, inserts $x_1 = 1$, then inserts $C_y$ on the
  remaining coordinates.  She sends the sketch of the result to Bob,
  who subtracts his $C_{y'}$ from coordinates $2, \dotsc, n$ and asks
  for the $\eps$-heavy hitters of the result.  For any
  $1 > \eps > 10/\sqrt{n}$, this list will contain coordinate $1$ if
  and only if $y = y'$, solving equality, giving the desired
  $\Omega(n)$ bound.  [And since $\eps$-heavy hitters exactly
  reconstructs binary vectors on $1/\eps^2$ coordinates, an
  $\Omega(n)$ bound for $\eps \leq O(1/\sqrt{n})$ is trivial.]
\end{proof}

\section*{Acknowledgments}
The authors would like to thank the anonymous reviewers of a previous
version of this paper for helpful suggestions that significantly
improved the presentation. D. Woodruff would like to thank support
from NSF grant No. CCF-1815840 and a Simons Investigator Award.
E. Price would like to thank support from NSF grant CCF-1751040
(CAREER).

\bibliographystyle{alpha}
\bibliography{ref}

\end{document}